\newcommand*{\nat}{\ensuremath{\mathbb{N}}}
\newcommand{\To}{\longrightarrow}
\newcommand{\manyk}[1]{{#1}_1,        \ldots       ,       {#1}_{k}}
\newcommand{\many}[2]{{#1}_1,        \ldots       ,       {#1}_{#2}}
\renewcommand{\k}{{\sf K}}
\renewcommand{\d}{{\sf D}}
\newcommand{\df}{{\sf D4}}
\newcommand{\jd}{{\sf JD}}
\newcommand{\jdf}{{\sf JD4}}
\newcommand{\lp}{{\sf LP}}
\newcommand{\term}[2]{#1  \colon_{#2} \!}
\newcommand{\cs}{\mathcal{CS}}
\renewcommand{\M}{\mathcal{M}}
\newcommand{\T}{\mathcal{T}}
\newcommand{\F}{\mathcal{F}}
\newcommand{\V}{\mathcal{V}}
\newcommand{\J}{\mathcal{J}}
\renewcommand{\A}{{\mathcal{A}}}
\title{Complexity Jumps In Multiagent Justification Logic Under Interacting Justifications}
\author{Antonis Achilleos} 
\institute{The Graduate Center of The City University of New York, \\
	365 Fifth Avenue
	New York, NY 10016 USA	
	\\ \email{aachilleos@gc.cuny.edu}}
\begin{document}
		
		\maketitle
		
		\begin{abstract} 
			The Logic of Proofs, LP, and its successor, Justification Logic, is a refinement of the modal logic approach to epistemology in which proofs/justifications are taken into account. In 2000 Kuznets showed that satisfiability for {\lp} is in the second level of the polynomial hierarchy, a result which has been successfully repeated for all other one-agent justification logics whose complexity is known.
			
			We introduce a family of multi-agent justification logics with interactions between the agents' justifications, by extending and generalizing the two-agent versions of the Logic of Proofs introduced by Yavorskaya in 2008. Known concepts and tools from the single-agent justification setting are adjusted for this multiple agent case. We present tableau rules and some preliminary complexity results. In several cases the satisfiability problem for these logics remains in the second level of the polynomial hierarchy, while for others it is {\PSPACE} or \EXP-hard. Furthermore,  this problem becomes {\PSPACE}-hard even for certain two-agent logics, while there are $\EXP$-hard logics of three agents.
		\end{abstract}
		
		\section{Introduction}
		
		Justification Logic is a family of logics of justified beliefs. Where epistemic modal logic treats formulas of the form $K \phi$ with the intended meaning that an agent knows/believes $\phi$, in Justification Logic we consider formulas of the form $\term{t}{}\phi$ with the intended meaning that $t$ is a justification for $\phi$ - or that the agent has justification $t$ for $\phi$. 
		The first Justification Logic was \lp, the logic of proofs, and appeared in \cite{Art95TR} by Artemov, but it has since developed in a wide system of explicit epistemic logics with notable complexity properties that significantly differ from the corresponding modal logics: while every single-agent justification logic whose complexity has been studied has its derivability problem in $\Pi_2^p$ (the second level of the polynomial hierarchy), the corresponding modal logics have \PSPACE-complete derivability problems. Furthermore certain significant fragments of these justification logics have an even lower complexity- \NP, or even \P\ in some cases. For an overview of Justification Logic see \cite{Art08RSL}. For an overview of complexity results of (single-agent) Justification Logic, see \cite{Kuz08PhD}.
		
		In epistemic situations we often have multiple agents, so as it is the case with modal logic, there is a need for a multi-agent justification logic.
		In \cite{multi2}, Yavorskaya presents two-agent variations of \lp. These logics feature interactions between the two agents' justifications: for $\lp_\uparrow$, for instance, every justification for agent 1 can be converted to a justification for agent 2 for the same fact and we have the axiom $\term{t}{1}\phi \rightarrow \term{\uparrow\!\! t}{2} \phi$,\footnote{We take some liberty with the notation to keep it in line with this paper.} while $\lp_!$ comes with the extra axiom $\term{t}{1} \phi \rightarrow \term{!t}{2}\term{t}{1}$, so agent 2 is aware of agent 1's justifications.
		
		In \cite{Achilleos14TRtwoagent}, we extended Yavorskaya's logics to two-agent variations of other justification logics, as well as to combinations of two different justification logics. 
		We then gave tableau procedures to prove that most of these logics were in the second step of the polynomial hierarchy, an expected result which mimics the ones for single-agent justification logics from \cite{DBLP:conf/csl/Kuznets00,Kuz08PhD,Achilleos:wollic11}. For some cases, however, we were able to prove \PSPACE-completeness, which was a new phenomenon for Justification Logic.
		
		In this paper we continue our work from \cite{Achilleos14TRtwoagent}. 
		We provide a general family of multi-agent logics. Each member of this family we call $(J^{n}_{D,F,V,C})_\cs$, where $n, D, F, V, C$ are parameters of the logic. For $(J^{n}_{D,F,V,C})_\cs$ we consider $n$ agents and the interactions between the agents' justifications are described by binary relations on agents, $V$ and $C$. Furthermore, not all agents are equally reliable: $D$ and $F$ are sets of agents, all agents in $D$ have consistent beliefs and all agents in $F$ have true beliefs. These concepts are made precise in section \ref{multidefinitions}. It is our goal to provide a flexible system capable of modelling situations of many diverse agents, or diverse types of justifications, allowing for reasonably general interactions among their justifications.
		
	For this family of logics we provide semantics and a general tableau procedure and then we make  observations on the complexity of the derivation problem for its members. In particular, we demonstrate that all logics in this family have their satisfiability problem in {\NEXP} - under reasonable assumptions. 
	This family demonstrates significant variety, as it also includes \PSPACE- and \EXP-complete members, while of course some of its members have their satisfiability problem in $\Sigma_2^p$. This is a somewhat surprising result, as all single-agent justification logics whose complexity is known have their satisfiability problem in $\Sigma_2^p$.

		
		
		This paper is organised as follows. In section \ref{multidefinitions}, we give the base definitions of the syntax, axioms and semantics for each logic in the family. 
		Then we reintroduce the star calculus, an invaluable tool, and our first complexity results, mirroring the ones for single-agent justification logic (see \cite{NKru06TCS}). The version of the star calculus we provide is somewhat different than the usual ones in that it is based on a given frame. If the frame includes a single world, we get the usual, more familiar version. In section \ref{tableaux} we give general tableau rules for each of our logics. Naturally, the rules are parameterized by the logic's parameters, including the interactions between the agents, so special attention is given in that section to these interactions. We then go on and further optimize the tableau procedure with respect to the number of world-prefixes it produces; this results in a $\Sigma_2^p$ upper bound for the satisfiability of a general class of logics.

		\section{Multiagent Justification Logic with Interactions}
		\label{multidefinitions}
		
		
		In this section we present the system we study in this paper, its semantics and the basic tools we will need later on. Most of the proofs for the claims here can be adjusted from the one- or two- agent versions of Justification Logic. The reader can see \cite{Art08RSL} or \cite{Kuz08PhD} for an overview of single-agent justification logic and \cite{Achilleos14TRtwoagent} for a two-agent version of this system.
		
		\subsection{Syntax}
		
		In this paper, if $n\in \nat$, $[n]$ will be the set $\{ 1, 2, \ldots , n \}$.
		For every $n \in \nat$, the justification terms of the language $L_n$ will include all constants $c_1, c_2, c_3, \ldots$ and variables $x_1, x_2, x_3, \ldots$ and if $t_1$ and $t_2$ are terms, then the following are also terms: $  [t_1 + t_2], [t_1\cdot t_2], ! t_1$. The set of terms will be referred to as $Tm$. We also use a set $SLet$ of propositional variables, or sentence letters. These will usually be $p_1,p_2,\ldots$.
		Formulas of the language $L_n$ include all propositional variables and if $\phi, \psi$ are formulas, $i \in [n] $ and $t$ is a term, then the following are also formulas of $L_n$: $\bot, \phi \rightarrow\psi , \term{t}{i} \phi $.  The remaining propositional connectives, whenever needed, are treated as constructed from $\rightarrow$ and $\bot$ in the usual way.
%
		The operators $\cdot, +$ and $!$ are explained by the following axioms. Intuitively, $\cdot$ applies a justification for a statement $A \rightarrow B$ to a justification for $A$ and gives a justification for $B$. Using $+$ we can combine two justifications and have a justification for anything that can be justified by any of the two initial terms - much like the concatenation of two proofs. 
		Finally, $!$ is a unary operator called the proof checker. Given a justification $t$ for $\phi$, $!t$ justifies the fact that $t$ is a justification for $\phi$.
		
			Let $n\in \nat$, $D,F\subseteq [n]$ and $V,C \subseteq [n]^{2}$. The logic $(J^{n}_{D,F,V,C})_{\emptyset}$ is the logic with modus ponens as a derivation rule and the following axioms:
			
			\begin{description}
				\item[Propositional Axioms:] Finitely many schemes of classical propositional logic;
				\item[Application:] $\term{s}{i}(\phi\rightarrow \psi) \rightarrow (\term{t}{i}\phi \rightarrow \term{[s\cdot t]}{i} \psi)$;
				\item[Concatenation:] 
				$\term{s}{i}\phi \rightarrow \term{[s + t]}{i} \phi$, 
				$\term{s}{i}\phi \rightarrow \term{[t + s]}{i} \phi$;
				\item[$F$-factivity:] For every $i\in F$, $\term{t}{i}\phi \rightarrow \phi$;
				\item[$D$-consistency:] For every $i\in D$, $\term{t}{i}\bot \rightarrow \bot$;
				\item[$V$-verification:] For every $(i,j)\in V$, $\term{t}{i}\phi \rightarrow  \term{! t}{j} \term{t}{i}\phi$;
				\item[$C$-conversion:] For every $(i,j)\in C$, $\term{t}{i}\phi \rightarrow  \term{t}{j}\phi$,
			\end{description}
			where in the above, $\phi$ and $\psi$ are formulas in $L_n$, $s, t$ are terms and $ i,j \in [n]$.
		$F$-factivity and $D$-consistency are the usual factivity and consistency axioms for every agent in $F$ and $D$ respectively. Positive introspection is seen as a special case of $V$-verification - in this context, if agent $i$ has positive introspection, then $(i,i) \in V$.
		
			A constant specification for $J^{n}_{D,F,V,C}$ is any set \[ \mathcal{CS} \subseteq \{\term{c}{i} A \mid c \mbox{ is a constant, } A \mbox{ an axiom of } J^{n}_{D,F,V,C} \mbox{ and } i \in [n]\}. \]
			We say that axiom $A$ is justified by a constant $c$ for agent $i$, when $\term{c}{i}A \in \cs$.
			A constant specification is:
			\emph{axiomatically appropriate with respect to $I\subseteq [n]$} if for every $i \in I$, each axiom is justified by at least one constant,
			\emph{schematic} if every constant justifies only a certain number of schemes from the ones above (as a result, if $c$ justifies $A$ for $i$ and $B$ results from $A$ and substitution, then $c$ justifies $B$ for $i$) and
			\emph{schematically injective} if it is schematic and every constant justifies at most one scheme. 
			Let $cl_n(\cs)$ be the smallest set such that $\cs \subseteq cl_n(\cs)$ and for every $\term{t}{i}\phi \in cl_n(\cs)$, it is the case that for every $j\in [n]$, $\term{!t}{j}\term{t}{i}\phi \in cl_n(\cs)$.
			$(J_{D,F,V,C}^n)_\cs$ is $(J_{D,F,V,C}^n)_\emptyset + R4^{n}_\cs$, where $R4^{n}_\cs$ just outputs all elements of $cl_n(\cs)$.

			
			$(J^n_{D,F,V,C})_\cs$ is consistent: just map each formula to the propositional formula that is the result of removing all terms from the original one; then, all axioms are mapped to propositional tautologies and modus ponens preserves this mapping.

		\subsection{Semantics}
		
		We now introduce models for our logic. In the single-agent cases, M-models (introduced in \cite{Mkr97LFCS,DBLP:conf/csl/Kuznets00}) and F-models (introduced in \cite{Fit05APAL,Pac05PLS,DBLP:conf/csr/Kuznets08}) are used (also in \cite{Achilleos14TRtwoagent} for two-agent logics) and they are both important in the study of complexity issues. In this paper we are mostly interested in F-models, which we will usually just call models. These are essentially Kripke models with an additional machinery to accommodate justification terms.
		
			Let $\J=(J^n_{D,F,V,C})_\cs$. Then, an F-model $\M$ for $\J$ is a quadruple $(W, (R_i)_{i \in [n]}, (\A_i)_{i \in [n]},\V)$, where $W \neq \emptyset$ is a set, for every $i\in [n]$, $R_i\subseteq W^2$ is a binary relation on $W$, $\V:SLet \To 2^{W}$ and for every $i\in [n]$, $\A_i:(Tm\times L_n) \To 2^{W}$. $W$ is called the \emph{universe} of $\M$ and its elements are the worlds or states of the model. $\V$ assigns a subset of $W$ to each propositional variable, $p$, and $\A_i$ assigns a subset of $W$ to each pair of a justification term and a formula. Furthermore, $(\A_{i})_{i\in[n]}$ will often be seen and referred to as $\A : [n]\times Tm \times L_n \To 2^{W}$ and $\A$ is called an admissible evidence function. Additionally, $\A$ must satisfy the following conditions:
			\begin{description}
				\item {Application closure:} for any $i\in [n]$, formulas $\phi, \psi$, and justification terms $t, s$, \\ 
				 \hfill 
				$ \A_i(s,\phi \rightarrow \psi) \cap \A_i(t,\phi) \subseteq \A_i(s\cdot t, \psi). $
				\item {Sum closure:} for any $i\in [n]$, formula $\phi$, and justification terms $t, s$, \\ 
				\hfill $ \A_i(t,\phi) \cup \A_i(s,\phi) \subseteq \A_i(t+s,\phi).$
				\item {$\cs$-closure:} for any 
						formula $\term{t}{i}\phi \in cl_n(\cs)$, $\A_i(t,\phi) = W$.
				\item {$V$-Verification Closure:} If $(i,j)\in V$, then $\A_i(t,\phi) \subseteq \A_j(!t,\term{t}{i}\phi)$
				\item {$C$-Conversion Closure:} If $(i,j)\in C$, then $\A_i(t,\phi) \subseteq \A_j(t,\phi)$ 
				\item {$V$-Distribution:} for any formula $\phi$, justification term $t$, $(i,j)\in V$ and $a,b \in W$, if $a R_j b$ and $a \in \A_i(t,\phi)$, then $b \in \A_i(t,\phi)$.
			\end{description}
			The accessibility relations, $R_i$, must satisfy the following conditions:
			\begin{itemize}
				\item If $i \in F$, then $R_i$ must be reflexive.
				\item If $i \in D$, then $R_i$ must be serial ($\forall a \in W \ \exists b \in W \ a R_i b$).
				\item If $(i,j) \in V$, then for any $a,b,c\in W$, if $a R_j b R_i c$, we also have $a R_i c$.\footnote{Thus, if $i$ has positive introspection (i.e. $(i,i)\in V$), then $R_i$ is transitive.}
				\item For any $(i,j)\in C$, $R_j \subseteq R_i$.
			\end{itemize}
			Truth in the model is defined in the following way, given a state $a$:
			\begin{itemize}
				\item $M,a \not \models \bot$ and if $p$ is a propositional variable, then $\M,a \models p$ iff $a \in \V(p)$.
				\item If $\phi, \psi$ are formulas, then $\M,a \models \phi \rightarrow \psi$ if and only if $M,a \models \psi$, or $\M,a \not \models \phi$.
				\item If $\phi$ is a formula and $t$ a term, then $\M,a \models \term{t}{i}\phi$ if and only if $a \in \A_i(t,\phi)$ and $\M,b\models \phi$ for all $b \in W$ such that $a R_i b$.
			\end{itemize}

	A formula $\phi \in L_n$ is called satisfiable if there are some $\M,a \models \phi$; we then say that $\M$ satisfies $\phi$ in $a$.		
		 If $\cs$ is axiomatically appropriate with respect to $D$, then $(J^n_{D,F,V,C})_\cs$ is sound and complete with respect to its models; it is also sound and complete with respect to its models that have the 
		 \emph{Strong Evidence Property: \emph{$\M,a \models \term{t}{i}\phi$ iff $a \in \A_I(t,\phi)$}}; furthermore, if $\phi$ is satisfiable, then it is satisfied by a model $\M$ of at most $2^{|\phi|}$ states - and in fact,  it is satisfied by a model $\M$ of at most $2^{|\phi|}$ states that has the strong evidence property (see \cite{Achilleos14TRtwoagent} for proofs of all the above that can be easily adjusted for this general case). A pair $(W,(R_i)_{i\in [n]})$ as above is called a frame for $(J^n_{D,F,V,C})_\cs$.

		\subsection{The $*$-calculus.}
%
		We present the $*$-calculi for $(J^n_{D,F,V,C})_\cs$. The $*$-calculi for the single-agent justification logics have proven to be an invaluable tool in the study of the complexity of these logics. This concept and results were adapted to the two-agent setting in \cite{Achilleos14TRtwoagent} and here we extend them to the general multi-agent setting. Although the calculi have significant similarities to the ones of the single-agent justification logics, there are differences, notably that each calculus depends upon a frame and operates upon $*$-expressions (defined below) prefixed by states of the frame. 
		A $*$-calculus was first introduced in \cite{NKru06TCS}, but its origins can be found in \cite{Mkr97LFCS}. 

		If $t$ is a term, $\phi$ is a formula, and $i\in [n]$, then $*_i(t,\phi)$ is a star-expression ($*$-expression).
		Given a frame $\mathcal{F} = (W,(R_i)_{i\in [n]})$ and $V, C \subseteq [n]^2$ and constant specification $\cs$, the $*_{\cs}^{\mathcal{F}}(V,C)$ calculus on the frame $\mathcal{F}$ is a calculus on $*$-expressions prefixed by worlds from $W$ with the axioms and rules that are shown in figure \ref{fig:starcalc}.\\
		\noindent
		\begin{figure}
			\begin{tabular}{|c|c|}\hline
			\begin{minipage}{0.5\linewidth}
			\begin{description}
			\item[$*\mathcal{CS}(\mathcal{F})$ Axioms:]  $w \ *_i(t,\phi)$, where $\term{t}{i}\phi \in cl_n(\cs)$ and $w \in W$
			
			\vspace{2ex}
			\item[$*$App$(\F)$:]
			\[ 
			\inferrule*
			{w \ *_i(s,\phi\rightarrow \psi) \\ w \ *_i(t,\phi)}{w \ *_i(s\cdot t , \psi) }
			\] 
			\item[$*$Sum$(\mathcal{F})$:] \[ 
			\inferrule*
			{w \ *_i(t,\phi) }{w \  *_i(s+t,\phi) } \qquad
			\inferrule*
			{w \ *_i(s,\phi) }{w \  *_i(s+t,\phi) }
			\] 
			\end{description}
			\end{minipage}
			&
			\begin{minipage}{0.45\linewidth}
			\begin{description}	
			\vspace{2ex}
			\item[$*$V$(\mathcal{F})$:] For any $(i,j) \in V$, \[ 
			\inferrule*
			{w \ *_i(t,\phi) }{w \  *_j(! t,\term{t}{i}\phi) } 
			\] 
			\item[$*$C$(\mathcal{F})$:] For any $(i,j) \in C$, \[ 
			\inferrule*
			{w \ *_i(t,\phi) }{w \  *_j(t,\phi) } 
			\] 
			\item[$*$V-Dis$(\mathcal{F})$:] For any $(i,j) \in V$ and $(a,b)\in R_j$, \[ 
			\inferrule*
			{a \ *_i(t,\phi) }{b \  *_i(t,\phi) } 
			\] 	\vspace{1ex}
			\end{description}
			\end{minipage}
			\\ \hline
			\end{tabular}
			\caption{The $*^\F_\cs(V,C)$-calculus: where $\mathcal{F} = (W,(R_i)_{i\in [n]})$ and for every $i \in [n]$}\label{fig:starcalc}
		\end{figure}
%
		\indent
			Notice that the calculus rules correspond to the closure conditions of the admissible evidence functions. In fact and because of this, given a frame $\F = (W,R_1,R_2)$ and a set $S$ of $*$-expressions prefixed by states of the frame, the function $\A$ such that $\A_i(t,\phi) = \{w\in W | S\vdash_{*^\F_\cs(V,C)} w\ *_i(t,\phi)  \}$ is an admissible evidence function and in fact it is the minimal admissible evidence function such that for every $w\ *_i(t,\phi) \in S$, $w \in \A_i(t,\phi)$ in the sense that always $\A_i(t,\phi) \subseteq \A'_i(t,\phi)$ for any other admissible evidence function $\A'$ such that for every $w\ *_i(t,\phi) \in S$, $w \in \A'_i(t,\phi)$. 
		Therefore, given a frame $\F = (W,R_1,R_2)$ and two set $\T,\F$ of $*$-expressions prefixed by states of the frame there is an admissible evidence function $\A$ on $\F$ such that  for every $w\ *_i(t,\phi) \in \T$, $w \in \A_i(t,\phi)$ and  for every $w\ *_i(t,\phi) \in \F$, $w \notin \A_i(t,\phi)$, if and only if there is no $f \in \F$ such that $\T \vdash_* f$.
		This observation  yields the following.
		\begin{proposition} 
			\label{prp:proofbystarcalc}
			For any\footnote{Note that we actually need an axiomatically appropriate constant specification to have completeness with respect to F-models, so we cannot immediately conclude with this result for \emph{any} constant specification. Nevertheless, proposition \ref{prp:proofbystarcalc} holds, but to prove it we need to introduce M-models, but we do not use M-models anywhere else. Thus, the reader can see \cite{NKru06TCS,Kuz08PhD}, or \cite{Achilleos14TRtwoagent} for a proof of proposition \ref{prp:proofbystarcalc} for all constant specifications.} 
			constant specification $\cs$, frame $\F$ with universe $W$ and $w\in W$,  $ (J^n_{D,F,V,C})_\cs \vdash \term{t}{i}\phi \ \Longleftrightarrow \  \vdash_{*^\F_\cs(V,C)} w\ *_i(t,\phi)$.
		\end{proposition}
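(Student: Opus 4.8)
The statement is essentially a corollary of the observation immediately preceding it, combined with soundness of $(J^n_{D,F,V,C})_\cs$ with respect to its F-models. The plan is to prove the two implications separately, using the \emph{translation} that erases the world-prefix, sending each $*$-expression $w\ *_i(t,\phi)$ to the formula $\term{t}{i}\phi$. The one rule to keep an eye on is $*$V-Dis$(\F)$: a premise $a\ *_i(t,\phi)$ and its conclusion $b\ *_i(t,\phi)$ have the same image $\term{t}{i}\phi$ under this translation, so the translation collapses it. This is the only rule that acts on the prefix, and it is what makes the calculus side insensitive to the particular $\F$ and $w$.

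For the direction $\Longleftarrow$, I would argue by induction on a $*^\F_\cs(V,C)$-derivation (from no premises) of $w\ *_i(t,\phi)$, showing that the translation of every derivable expression is a theorem of $(J^n_{D,F,V,C})_\cs$. Each calculus rule mirrors an axiom, so each step is routine: the $*\cs(\F)$ axioms translate to members of $cl_n(\cs)$, which are exactly the outputs of $R4^n_\cs$; the two premises of $*$App$(\F)$ are combined via the Application axiom and two uses of modus ponens; $*$Sum$(\F)$, $*$V$(\F)$ and $*$C$(\F)$ are handled similarly by the Concatenation, $V$-verification and $C$-conversion axioms, respectively; and $*$V-Dis$(\F)$ requires nothing, since premise and conclusion share a translation. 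This half is purely syntactic and holds for every constant specification.

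For the direction $\Longrightarrow$, I would proceed by contraposition and invoke the preceding observation with the empty set of ``true'' expressions and the singleton $\{w\ *_i(t,\phi)\}$ as the set of ``false'' expressions. Assuming $\not\vdash_* w\ *_i(t,\phi)$, the observation yields an admissible evidence function $\A$ on $\F$ with $w\notin\A_i(t,\phi)$. Taking any valuation $\V$ produces an F-model $\M=(W,(R_i)_{i\in[n]},(\A_i)_{i\in[n]},\V)$: the frame conditions hold because $\F$ is a frame for the logic, and the admissibility conditions hold because $\A$ is an admissible evidence function. By the truth clause for $\term{t}{i}\phi$, the evidence conjunct already fails at $w$, so $\M,w\not\models\term{t}{i}\phi$; by soundness, $(J^n_{D,F,V,C})_\cs\not\vdash\term{t}{i}\phi$. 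Since $\F$ and $w\in W$ were arbitrary, this gives the contrapositive of $\Longrightarrow$ for every frame and world.

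The delicate point, and the one I expect to need the most care, is the constant-specification hypothesis rather than the rule-by-rule matching or the model construction. The model-theoretic half uses only soundness, and soundness of F-models is unconditional (the $R4^n_\cs$ outputs are valid by $\cs$-closure together with $V$-verification closure); but the soundness-and-completeness statement recorded earlier is phrased only under axiomatic appropriateness with respect to $D$. To obtain the proposition cleanly for an \emph{arbitrary} $\cs$ within the framework already set up, one should instead route the argument through the M-model soundness/completeness of \cite{NKru06TCS,Kuz08PhD,Achilleos14TRtwoagent}, exactly as flagged in the footnote.
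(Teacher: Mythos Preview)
Your proposal is correct and matches what the paper intends: the paper writes only ``This observation yields the following'' and defers the general-$\cs$ case to the references, while you spell out the natural two-direction argument --- a syntactic induction on $*^\F_\cs(V,C)$-derivations for $\Longleftarrow$, and the observation plus F-model soundness for the contrapositive of $\Longrightarrow$.

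One point where you actually go further than the paper: you observe that F-model soundness is unconditional in $\cs$, so the M-model detour flagged in the footnote is not strictly needed. That is correct --- seriality and reflexivity are frame conditions, not consequences of axiomatic appropriateness, and the $\cs$-closure condition alone validates every member of $cl_n(\cs)$ (by induction on the build-up of $cl_n(\cs)$: the base case is an axiom, and the inductive case is again a member of $cl_n(\cs)$). Your parenthetical appeal to $V$-verification closure is therefore unnecessary; $\cs$-closure already forces $\A_j(!t,\term{t}{i}\phi)=W$ for every $j$, since $cl_n$ closes under $!$ for \emph{all} $j\in[n]$, not only those with $(i,j)\in V$. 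The paper's caution stems from the fact that it only records soundness bundled with completeness under the appropriateness hypothesis, so strictly within what is proved there your argument rests on a claim the paper does not isolate; but the claim is easy and your sketch of it is right.
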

		
		
		\begin{proposition}
			\label{thm:calccompnew}
			Let $\mathcal{CS}$ be a schematic constant specification in $\P$ and $V, C \subseteq [n]^2$. Then,
			the following problem is in $\NP$:
			Given a finite frame $\mathcal{F} = (W, (R_i)_{i\in [n]})$ and a finite set $S$ of $*$-expressions prefixed by worlds from $W$, a formula $\phi$, a term $t$, a $w \in W$ and $i \in [n]$, is it the case that \[ S \vdash_{*^{\mathcal{F}}_{\cs}(V,C)} w \ *_i(t,\phi)\mbox{?} \]
		\end{proposition}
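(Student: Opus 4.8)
The plan is to place the problem in $\NP$ by a guess-and-verify argument: I will show that whenever $S \vdash_{*^{\F}_{\cs}(V,C)} w\ *_i(t,\phi)$ there is a derivation built from only polynomially many prefixed $*$-expressions, each of size polynomial in the input, and that the correctness of such a derivation is checkable in deterministic polynomial time. An $\NP$ machine then guesses the derivation (as a sequence or directed acyclic graph of prefixed $*$-expressions ending in $w\ *_i(t,\phi)$) and verifies it. Two facts drive the size bound — a subterm property for the terms and a subformula/size property for the formulas — after which verification is routine.

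First I would establish the subterm property: every term occurring in a derivation of $w\ *_i(t,\phi)$ is a subterm of $t$. Reading any rule from conclusion to premises, the term either strictly shrinks — the premises of $*$App, $*$Sum and $*$V carry proper subterms of the conclusion's term ($s$ and $t$ from $s\cdot t$, $s$ or $t$ from $s+t$, and $t$ from $!t$) — or is preserved, as in $*$C and $*$V-Dis, while the axioms and the hypotheses of $S$ sit at the leaves. Viewing the derivation as a tree rooted at the goal, a downward induction shows every node carries a term that is a subterm of $t$; in particular only those hypotheses of $S$ whose term is a subterm of $t$ can be used. Hence the set of terms that can appear is linear in $|t|$.

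The main obstacle is the analogous bound on formulas, because two rules enlarge or introduce formulas: $*$V replaces $\phi$ by the larger $\term{t}{i}\phi$, and the premise $w\ *_i(t,\phi)$ of $*$App contributes an antecedent $\phi$ that does not occur in its conclusion $w\ *_i(s\cdot t,\psi)$. I would control both by an induction following the subterm order, which is legitimate since terms only shrink upward in the tree. The growth through $*$V is bounded because its output is paired with the term $!t$, which must remain a subterm of $t$; thus at most $|t|$ verification layers, hence at most $|t|$ nested $\term{}{}$-prefixes, can occur along any branch, and each layer adds a term that is itself already a subterm. The antecedent introduced by $*$App is a formula derivable over a strictly smaller subterm, so by the induction hypothesis its size is already under control; tracing it upward, every such auxiliary formula is ultimately grounded either in a hypothesis of $S$ or in a $\cs$-axiom whose term is a subterm of $t$. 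This is exactly where schematicity of $\cs$ is used, and it mirrors the subformula property for the single-agent $*$-calculus (cf.\ \cite{NKru06TCS,Kuz08PhD,Achilleos14TRtwoagent}). Combining the two properties, the prefixed $*$-expressions that need to occur range over the polynomially many worlds, the $n$ agents, the linearly many subterms of $t$, and polynomially many formulas, so a minimal (repetition-free) derivation has polynomial length and polynomial total size.

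It remains to verify a guessed derivation in polynomial time, which is where the hypotheses $\cs \in \P$ and $\cs$ schematic enter. Each line must be justified as one of: a hypothesis, i.e.\ a membership test in the finite set $S$; a $*\cs(\F)$ axiom, i.e.\ the tests $\term{s}{j}\psi \in cl_n(\cs)$ and $w \in W$; or a conclusion of $*$App, $*$Sum, $*$V, $*$C or $*$V-Dis from earlier lines. The membership $\term{s}{j}\psi \in cl_n(\cs)$ is decided in polynomial time by peeling off leading $!$'s from $s$ together with the matching $\term{}{}$-layers from $\psi$ — at most $|t|$ of them by the subterm property — until one reaches a test $\term{c}{k}\rho \in \cs$, which lies in $\P$ because $\cs$ is a schematic constant specification in $\P$. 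The side conditions of the rules, namely $(i,j)\in V$, $(i,j)\in C$, $w \in W$ and $(a,b)\in R_j$, are all decided by direct lookup in the finite input frame and the finite relations $V,C$. Since the certificate has polynomial size and every check is polynomial, the whole procedure runs in $\NP$.
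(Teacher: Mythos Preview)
Your high-level plan (guess a certificate, verify in polynomial time) is the right one, and your subterm property is correct and is indeed the backbone of the argument. The gap is in the formula bound.

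Your induction on the subterm order does not control the cut formula of $*$App. To derive $w\ *_i(s\cdot s',\psi)$ you must pick \emph{some} $\chi$ with $w\ *_i(s,\chi\rightarrow\psi)$ and $w\ *_i(s',\chi)$ both derivable; your inductive hypothesis, as stated, only says that once such a $\chi$ is fixed, a minimal derivation of each premise is small. It does not produce a small $\chi$, and in general no small $\chi$ need exist if formulas are written as trees: the leaves that are $\cs$-axioms are instances of schemes, and the instantiation forced by matching the root to $\phi$ is essentially a most general unifier, which can be exponentially large as a tree. So ``a minimal repetition-free derivation has polynomial total size'' is not something you have established, and is likely false in the representation you implicitly use. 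The appeal to a ``subformula property'' from \cite{NKru06TCS,Kuz08PhD} misdescribes what happens there: those proofs do \emph{not} bound concrete formulas.

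The paper's proof avoids this entirely. It guesses only the \emph{shape} of the derivation: a tree whose nodes are pairs $(j,s)$ with $s$ a subterm of $t$, which is polynomial by your subterm property. Leaves are labelled by axiom \emph{schemes} (or by formulas from $S$), and the formula at each internal node is computed by Robinson unification from the children; at the root one unifies with $\phi$. With DAG representation of schemes, unification stays polynomial, and schematicity of $\cs$ is what lets a constant leaf carry a scheme rather than a concrete instance. Worlds are handled separately by propagating a \emph{set} of worlds along the same tree (closing under the $f_j$ maps for $*$V-Dis), and at the end one checks $w$ belongs to the root's set. So the certificate is the term/agent tree plus the leaf labels; formulas are never guessed. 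Your verification paragraph is fine, but the thing you need to verify is this unification-based object, not a sequence of fully instantiated prefixed $*$-expressions.
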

		The proof of this proposition is very similar to the one that can be found in \cite{Kuz08PhD}. What is different here is the additional assignment of a state set to each node of the derivation tree, which does not change things a lot.
		
		\begin{proof} 
			For this proof and every $j \in [n]$, let $f_j:2^W\To 2^W$, s.t. for every $X \subseteq W$, $f_j(X) = \{y\in W | \exists x \in X, (j,j')\in V \ xR_{j'} y\}\cup X$. 
			\begin{itemize}
				\item Nondeterministically construct a rooted tree with pairs of the form $(j,s)$, where $j \in [n]$
				and $s$ is a subterm of $t$, as nodes, such that $(i,t)$ is the root and the following conditions are met. Node $(j,s)$ can be the parent of $(j_1,s_1)$ or of both $(j_1,s_1)$ and $(j_2,s_2)$ as long as there is a rule $\frac{w\ *_{j_1}(s_1,\phi_1)}{w\ *_{j}(s,\phi_3)}$ or $\frac{w\ *_{j_1}(s_1,\phi_1) \ \ w\ *_{j_2}(s_2,\phi_2)}{w\ *_{j}(s,\phi_3)}$, respectively, of the $*$-calculus and as long as $(j_1,s_1) \neq (j,s) \neq (j_2,s_2)$. To keep this structure a tree, we can ensure at this step that there are no cycles, which would correspond to consecutive applications of $*$C$(\F)$ and which would be redundant.
				\item Nondeterministically assign to each leaf, $(j,l)$, either 
				\begin{itemize}
				\item
					some formula $\psi$ and the closure under $f_j$ of some set $W'\subseteq W$, s.t. for every $w \in W'$, $w\ *_j(l,\psi) \in S$ or, 
				\item
					as long as $l$ is of the form $\underbrace{!\cdots !}_{k} c$, where $c$ a constant, $k \in \nat$, then we can also assign some $ \term{\underbrace{!\cdots !}_{k-1} c}{i_{k}} \cdots \term{!c}{i_2} \term{c}{i_1}  A$ and $W' = W$, where $A$ an axiom scheme, s.t. $\term{c}{i_1}A \in \cs$.
				\end{itemize}
				\item If for some node $\nu = (j,s)$ all its children, say $\nu_1 = (j_1,s_1), \nu_2 = (j_2,s_2)$ have been assigned some scheme or formula $P_1, P_2$ and world sets $V_1, V_2$, assign to $\nu$ some scheme or formula $P$, such that $P_1,P_2$ can be unified to $P'_1,P'_2$ such that $\frac{w *_{j_1}(s_1,P'_1) \ w *_{j_2}(s_2,P'_2)}{w *_{j}(s,P)}$ is a rule in the $*_\cs^{\F}(V,C)$-calculus and world set $V$, where $V$ is the closure of $V_1 \cap V_2$ under $f_j$. 
				Apply this step until the root of the tree has been assigned some scheme or formula and a $W'$ subset of $W$.
				\item Unify $\phi$ with the formula assigned to $(i,t)$ and verify that $w\in W'$ 
			\end{itemize}
		
			If some step is impossible, the algorithm rejects. Otherwise, it accepts.
			Using efficient representations of schemes using DAGs and Robinson's unification algorithm, the algorithm runs in polynomial time. We can see that as the tree is constructed, if $(i,s)$ is assigned scheme $P$ and set $V$, then the construction effectively describes a valid derivation of any expression of the form $v\ *_i(s,\psi)$, where $v \in V$ and $\psi$ an instance of $P$.
			Therefore, if the algorithm accepts, there exists a valid $*^{\mathcal{F}}_{\cs}(V,C)$-calculus derivation of $w \ *_i(t,\phi)$. On the other hand if there is some $*^{\mathcal{F}}_{\cs}(V,C)$-calculus derivation for $w \ *_{i}(t,\phi)$ from $S$, then the algorithm in the first two steps can essentially describe this derivation by producing the derivation tree and the formulas/schemes by which the derivation starts. Therefore, the algorithm accepts if and only if there is a $*^{\mathcal{F}}_{\cs}(V,C)$-calculus derivation for $w \ *_{i}(t,\phi)$ from $S$. See \cite{NKru06TCS} and \cite{Kuz08PhD} for a more detailed analysis.
		\qed \end{proof}
		
			The number of nondeterministic choices made by the algorithm in the proof of proposition \ref{thm:calccompnew} is bounded by $|t| + |S'|$, where $S' = \{*_j(s,\psi) | \exists w\ *_j(s,\psi) \in S  \}$. Therefore, if there is some formula $\psi$ such that $\term{t}{i}\phi$ is a subformula of $\psi$ and  for every $*_j(s,\psi') \in S'$, $\term{s}{j}\psi'$ is a subformula of $\psi$, then $2|\psi| \geq |t| + |S'|$ and therefore we can simulate all nondeterministic choices in time $2^{O(|\psi|)}$. Thus the algorithm can be turned into a deterministic one running in time $2^{O(|\phi|)}\cdot O(|W|^2)$. 
			This observation, the fact that a satisfiable $\phi$ can be satisfied by a model of at most $2^{|\phi|}$ states (see the previous subsection) and the previous two propositions give the following results:
%
		\begin{corollary} Let $\J = (J^n_{D,F,V,C})_\cs$, where $\cs \in \P$ is schematic. Then,
			\begin{enumerate}
				\item Deciding for $\term{t}{i}\phi$ that $\J \vdash \term{t}{i}\phi$ is in \NP.
				\item If $\cs$ is axiomatically appropriate with respect to $D$, then the satisfiability problem for $\J$ is in \NEXP.
			\end{enumerate}
		\end{corollary}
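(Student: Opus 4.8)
The plan is to derive both claims from Proposition~\ref{prp:proofbystarcalc}, Proposition~\ref{thm:calccompnew}, the small-model property, and the deterministic simulation of the star-calculus algorithm described just after the proof of Proposition~\ref{thm:calccompnew}. For part~1, I would exploit that Proposition~\ref{prp:proofbystarcalc} holds for \emph{any} frame and fix the one-world frame $\F_0 = (\{w\},(R_i)_{i\in[n]})$ with every $R_i = \{(w,w)\}$; this trivially meets all the frame conditions (reflexivity, seriality, the $V$-condition, and $R_j\subseteq R_i$ for $C$). Then $\J\vdash\term{t}{i}\phi$ holds iff $\vdash_{*^{\F_0}_\cs(V,C)} w\ *_i(t,\phi)$, which is exactly the special case $S=\emptyset$ of the problem that Proposition~\ref{thm:calccompnew} places in \NP; since $\F_0$ has constant size, part~1 is immediate.

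For part~2 I would guess an exponential-size model and verify it with the star calculus. Because $\cs$ is axiomatically appropriate with respect to $D$, a satisfiable $\phi$ is satisfied in a model with at most $2^{|\phi|}$ states enjoying the strong evidence property, and such a model can be written down in $2^{O(|\phi|)}$ space. Concretely I would nondeterministically guess a frame $\F=(W,(R_i)_{i\in[n]})$ with $|W|\le 2^{|\phi|}$, a truth value at each world for every subformula of $\phi$, and, reading these off, the two sets $\T=\{w\ *_i(t,\psi)\mid \term{t}{i}\psi\text{ guessed true at }w\}$ and $\mathcal{N}=\{w\ *_i(t,\psi)\mid \term{t}{i}\psi\text{ guessed false at }w\}$.

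The verifier then checks: (a) the frame conditions; (b) Boolean consistency of the guessed truth values at each world; (c) that whenever $\term{t}{i}\psi$ is guessed true at $w$, $\psi$ is guessed true at every $R_i$-successor of $w$; (d) that $\phi$ is guessed true at some world; and (e) that an admissible evidence function realizing the guess exists, which by the criterion stated just before Proposition~\ref{prp:proofbystarcalc} amounts to: there is no $f\in\mathcal{N}$ with $\T\vdash_{*^{\F}_\cs(V,C)} f$. Condition~(c) records that in a strong-evidence model truth still satisfies the general clause, so that membership in the evidence function forces $\psi$ at all $R_i$-successors, while (e) guarantees that truth equals membership on every subformula of $\phi$.

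The cost of the procedure is dominated by (e). Every expression in $\T\cup\mathcal{N}$ has the shape $w\ *_j(s,\chi)$ with $\term{s}{j}\chi$ a subformula of $\phi$, so $\phi$ itself serves as the formula $\psi$ in the remark following Proposition~\ref{thm:calccompnew}; hence each test $\T\vdash_{*} f$ can be decided deterministically in time $2^{O(|\phi|)}\cdot O(|W|^2) = 2^{O(|\phi|)}$. Since $|\mathcal{N}|\le 2^{O(|\phi|)}$, step~(e), and therefore the whole verification, runs in $2^{O(|\phi|)}$ deterministic time, placing satisfiability in \NEXP. The part I expect to require the most care is not a single hard step but the bookkeeping in (c)--(e): one must impose the strong evidence property (truth equals membership in the evidence function) and the general truth clause consistently with one another, and confirm that the exponentially many, individually exponential-time derivability checks genuinely compose into a single exponential-time verification.
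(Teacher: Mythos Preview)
Your proposal is correct and follows essentially the same route the paper takes: the paper derives the corollary directly from Proposition~\ref{prp:proofbystarcalc}, Proposition~\ref{thm:calccompnew}, the deterministic $2^{O(|\phi|)}\cdot O(|W|^2)$ simulation of the $*$-calculus, and the $2^{|\phi|}$ small-model property. Your explicit guess-and-verify description for part~2 (guessing a frame plus subformula truth values, reading off $\T$ and $\mathcal{N}$, and checking (a)--(e)) is precisely the unpacking of the paper's one-line justification, and your choice of the single-world reflexive frame for part~1 is exactly the intended instantiation of Proposition~\ref{prp:proofbystarcalc}.
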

 Additionally notice that if the term $t$ has no $+$, $\cs$ is schematically injective and $S = \emptyset$, we have essentially eliminated nondeterministic choices from the procedure above. Thus, we conclude (for the original result, see \cite{DBLP:conf/tark/ArtemovK09}):
		\begin{corollary} 
			Let $\J = (J^n_{D,F,V,C})_\cs$, where $\cs \in \P$ is schematically injective. Then,
			 deciding for $\term{t}{i}\phi$, where $t$ has no `$+$', that $\J \vdash \term{t}{i}\phi$ is in \P.
		\end{corollary}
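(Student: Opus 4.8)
The plan is to reduce the question to $*$-calculus derivability via Proposition~\ref{prp:proofbystarcalc} and then to reread the nondeterministic algorithm of Proposition~\ref{thm:calccompnew}, showing that under the present hypotheses every guess it makes is forced, so that it collapses to a deterministic polynomial-time procedure. First I would invoke Proposition~\ref{prp:proofbystarcalc}, which holds for \emph{any} frame: I fix the one-element frame $\F=(\{w\},(R_i)_{i\in[n]})$, so that $\J\vdash\term{t}{i}\phi$ iff $\vdash_{*^{\F}_{\cs}(V,C)}w\ *_i(t,\phi)$, i.e.\ derivability from the empty hypothesis set $S=\emptyset$. This is exactly the instance to which the algorithm of Proposition~\ref{thm:calccompnew} applies, now with $S=\emptyset$.

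Next I would eliminate, one by one, the sources of nondeterminism in that algorithm. Because $t$ has no `$+$', no subterm of $t$ is a sum, so the conclusion $w\ *_j(s_1+s_2,\cdot)$ of $*$Sum$(\F)$ never matches a node of the term tree and those branches are never taken. Because $S=\emptyset$, the first leaf-assignment option is vacuous, so every leaf is a tower $\underbrace{!\cdots!}_{k}c$ assigned an axiom; and because $\cs$ is schematically injective, the constant $c$ fixes a \emph{unique} scheme, so the leaf carries no choice. The scheme assigned to an internal node is computed from its children by unification, and since most general unifiers are unique, this propagation is deterministic once the tree shape is fixed. The only choices that formally remain are the agents selected at $*$V$(\F)$- and $*$C$(\F)$-nodes; but at a $*$V$(\F)$-node over a term $!s'$ the produced scheme has the form $\term{s'}{a}\chi$, so once the node's target is known the source agent $a$ and the child target $\chi$ are read off it, while a $*$C$(\F)$-step only moves the formula between agents and can be resolved by a $C$-reachability computation over the fixed set $[n]$ rather than by guessing. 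Propagating the given $\phi$ downward and computing bottom-up, with memoisation over the $\le|t|$ subterms and $\le n$ agents and with DAG representations of schemes and Robinson unification exactly as in \cite{DBLP:conf/tark/ArtemovK09}, the total work is polynomial, which places the problem in \P.

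The main obstacle is the \emph{application} step. For $s=s_1\cdot s_2$ the rule $*$App$(\F)$ forces the child targets $\chi\rightarrow\psi$ and $\chi$, but the cut formula $\chi$ is not a subformula of the node's target and must be reconstructed from what $s_2$ (equivalently $s_1$) justifies. In the single-agent setting each $+$-free term justifies a \emph{unique} formula, so $\chi$ is determined; this is essentially the content of \cite{DBLP:conf/tark/ArtemovK09}. In the multi-agent setting $*$V$(\F)$ lets a term such as $!s'$ justify several schemes that differ only in their agent annotations, so naive uniqueness fails and the cardinality of the justified set can even grow exponentially in the $!$-nesting depth. I would resolve this \emph{without} enumerating those schemes: I would represent the formula justified by each $+$-free subterm as a single template with the agent indices left as a new sort of unification variable (an \emph{agent slot}), carrying alongside it a polynomial-size conjunction of side constraints, namely the $V$-edge requirements introduced at each $!$ and the $C$-reachability requirements. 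Unification at an application then identifies agent slots of $s_1$'s antecedent with those of $s_2$'s template, and the combined constraint is checked for satisfiability over $[n]$; unifying the final template with the given, fully instantiated $\phi$ forces every agent slot to a concrete value, after which satisfiability of the constraints is a direct check. I expect this schematic treatment of the agent indices to be the delicate point, but it keeps every stage polynomial and thus yields the claimed \P\ bound.
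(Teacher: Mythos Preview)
Your high-level strategy coincides with the paper's: the paper's entire argument is the one-sentence remark immediately preceding the corollary, namely that with $t$ $+$-free, $\cs$ schematically injective, and $S=\emptyset$, the nondeterministic choices in the algorithm of Proposition~\ref{thm:calccompnew} are ``essentially eliminated'', with a pointer to \cite{DBLP:conf/tark/ArtemovK09}. You go further than the paper in isolating the agent-index subtlety that the word ``essentially'' quietly absorbs, and your idea of treating agent indices as a separate sort of unification variable with side constraints is the natural way to make this precise.

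There is, however, a gap in your argument. Your claim that unifying the root template with the concrete $\phi$ ``forces every agent slot to a concrete value'' is not correct: an agent slot introduced at a $*$V$(\F)$-step may occur only inside the cut formula $\chi$ of a later $*$App$(\F)$-step and hence disappear from the conclusion. For a concrete instance, let $c$ justify the $C$-conversion scheme $\term{t}{i}\psi\to\term{t}{j}\psi$ and consider $c\cdot(!c')$. The template at $!c'$ is $\term{c'}{a}A'$ with agent slot $a$; unifying the antecedent of $c$'s scheme with this template identifies $i$ with $a$, and the resulting conclusion is $\term{c'}{j}A'$, which no longer mentions $a$. After matching against $\phi$ the slot $a$ is still free, yet it carries the constraint $(a,j')\in V$ from the $*$V$(\F)$-step (where $j'$ is the node agent at $!c'$) together with the axiom's requirement $(a,j)\in C$. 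So the ``direct check'' you promise is really a constraint-satisfaction problem over the free slots, and you have not argued that it is polynomial. One way to close the gap is to observe that the node-agent variables and their $V$/$C$/equality constraints follow the shape of the derivation tree, so the constraint graph is acyclic over the fixed domain $[n]$; together with the fact that each axiom scheme contributes only a bounded pattern of unary and equality constraints on its own scheme variables, this yields a tree-structured CSP, solvable in polynomial time by the usual bottom-up propagation. Without some argument of this kind your reduction to \P\ is incomplete.
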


	\section{Tableaux}\label{tableaux}
		In this section we give a general tableau procedure for every logic which varies according to each logic's parameters. We can then use the tableau for a particular logic and make observations on its complexity, as we do in the following section. A version of the tableau which is more efficient for some cases follows after. To develop the tableau procedure we need to examine the relations on the agents more carefully than we have so far. For this section and the following one fix some $J = (J^n_{D,F,V,C})_\cs$ and  we assume $\cs$ is axiomatically appropriate with respect to $D$ and schematic.

\subsection{A Closer Look on the Agents and their Interactions}
		 If $\manyk{A}$ are binary relations on the same set, then $A_1\cdots A_k$ is the binary relation on the same set, such that $x A_1 \cdots A_k y$ if and only if there are $\manyk{x}$ in the set, such that $x = x_1 A_1 x_2 A_2 \cdots A_{k-1} x_k A_k y$.  If $A$ is a binary relation, then $A^*$ is the reflexive, transitive closure of $A$; if $A$ is a set (but not a set of pairs), then $A^*$ is the set of strings from $A$.	We also use the following relation on strings: $a \sqsubseteq b$ iff there is some string $c$ such that $ac = b$.
%
		We  define the following subsets of and relations on $[n]$.
		\begin{description}
			\item[$ S = S(J) =$] $ \{ i \in [n] | (\exists j \in D \cup F) \ i C^* j \}$ attempts to capture exactly those agents that require a serial accessibility relation; from now and on these agents monopolize our attention;
			\item[$R = R(J) =$] $ \{ i \in [n] | (\exists j \in F) \ i C^* j \}$ attempts to capture exactly those agents that require a reflexive accessibility relation;
			\item[$C_F = $] $C\cup \{(i,j)\in V| i \in R, j\in S\}$ notice that if $iC_F j $, and $x R_j y$, then $xR_i y$;\footnote{The $F$ in $C_F$ is to indicate that $C_F$ is a variation of $C$ influenced by the agents in $F$, not that it is available with other subscripts}
			\item[$Q = $] $(V|_S\cup C|_S)^*$:  if $i Q j$, then $i$'s justifications somehow affect $j$'s justifications; conversely, $j$'s accesibility relations somehow affect $i$'s accessibility relations;
			\item[$\equiv_C = $] $C^*_F \cap (C_F^*)^{-1}$: $i \equiv_C j$ if and only if $i C_F^* j$ and $j C_F^* i$; we can easily see that $\equiv_C$ is an equivalence relation and that if $i \equiv_{C} j$, then $i$ and $j$ have the same accessibility relations; similarly,
		\end{description}
		
		For the above equivalence relations we can define equivalence classes on $S$, $P_C = \{\many{L}{k_C}\}$. 
		$\chi(i)$ is the equivalence class $L \in P_C$ s.t. $i \in L$. 
		
		We can define relations $<_C$ and $<_{VC}$ on $P_C$ in the following way: $P_1 \leq_C P_2$ iff $\exists x \in P_1 \exists y \in P_2$ s.t. $x C_F^* y$ and $P_1 \leq_{VC} P_2$ iff $\exists x \in P_1 \exists y \in P_2$ s.t. $x Q y$. Also,  $P_1 <_{C} P_2$ iff  $P_1 \leq_{C} P_2$ and  $P_1 \not\leq_{C} P_2$ 
		and similarly for  $P_1 <_{VC} P_2$. Then, define $P_1 \leq_V P_2$ iff there are $x \in P_1$ and $y \in P_2$ s.t. $x Q V Q y$, that is, there are $ x_1, x_2 \in S$, where $x Q x_1 V x_2 Q y$. $P_1 <_V P_2$ iff $P_1 \leq_V P_2$ and $P_2 \not \leq_V P_1$.

		
		%
		Let $i \in [n]$. Then, $M_C(i) = M_C(\chi(i)) = \{L\in P_C| \chi(i) \leq_C L$ and $ \not\exists L'\in P_C $ s.t. $ L <_{C} L' \}$.
	

%

		\subsection{The Tableau Procedure.}
		
		The formulas used in the tableau will have the  form 
		$ \emptyset.\sigma \ s \ \beta \psi$, where $ \psi \in L_n$ or is a $*$-expression, $\sigma \in P_C^*$ (the world prefixes are strings of equivalence classes of agents), $\beta$ is (either the empty string or) of the form $\Box_i \Box_j \cdots \Box_k$, $i,j,\ldots,k \in [n]$, and $s \in \{T,F\}$. Furthermore, $\emptyset.\sigma$ will be called a world-prefix or state-prefix, $s$ a truth-prefix and world prefixes will be denoted as $\emptyset.s_1.s_2\ldots s_k$, instead of $\emptyset.s_1s_2\cdots s_k$, where for all $x \in [k]$, $s_x \in P_C$.
		
		A tableau branch is a set of formulas of the form $ \sigma \ s \ \beta \psi$, as above. A branch is complete if it is closed under the tableau rules (they follow). It is propositionally closed if $ \sigma \ T \ \beta \psi$ and $ \sigma \ F \ \beta \psi$ are both in the branch. We say that a tableau branch is constructed by the tableau rules from $\phi$, if it is a closure of $\{\emptyset\ T\ \phi \}$ under the rules.
		
		%
		
%
%
		
		
		For every $i,j \in S$, $i \in N(j)$ if
		there are some $i', i'' \in S\smallsetminus R$  such that $i' \equiv_{VC} i''$, $i'' VC_F^* j$ and $\chi(i) \in M_C(i')$.
		
		
		The tableau rules will include certain classical rules to cover propositional cases of formulas, as well as the ones that follow:
		
		\noindent 
		\begin{minipage}[l][1.7cm]{0.4\linewidth}
			\[ 
			\inferrule*[right=TrB]{\sigma\ T\ \term{t}{i}\psi}{\sigma\ T\ *_i(t,\psi) \\\\ \sigma\ T\ \Box_i\psi }  
			\]
		\end{minipage}
		\hfill
		\begin{minipage}{0.55\linewidth}
			if $i \in S$;
		\end{minipage} 
		\begin{minipage}[l][1.3cm]{0.4\linewidth}
			\[ 
			\inferrule*[right=TrD]{\sigma\ T\ \term{t}{i}\psi}{  \sigma.\chi(j)\ F\ \bot }  
			\]
		\end{minipage}\hfill
		\begin{minipage}{0.55\linewidth}
			if $i \in S$, $\chi(j) \in M_C(\chi(i))$, and $j \notin R$;
		\end{minipage}
		\begin{minipage}[l][1.3cm]{0.4\linewidth}
			\[ \inferrule*[right=Tr]{\sigma\ T\ \term{t}{i}\psi}{\sigma\ T\ *_i(t,\psi)}
			\]
		\end{minipage}\hfill
		\begin{minipage}{0.55\linewidth}
			if $i \not\in S$;
		\end{minipage}
		\begin{minipage}[l][1.3cm]{0.4\linewidth}
			\[ 
			\inferrule*[right=Fa]{\sigma\ F\ \term{t}{i}\psi}{\sigma\ F\ *_i(t,\psi)} 
			\]
		\end{minipage}\hfill
		\begin{minipage}{0.55\linewidth}  \phantom{I like cheese}
		\end{minipage}
		\begin{minipage}[l][1.3cm]{0.4\linewidth}
			\[
			\inferrule*[right=S]{\sigma.\chi(j) \ F \ \bot }{\sigma.\chi(j).\chi(i) \ F \ \bot} 
			\]
		\end{minipage}\hfill
		\begin{minipage}{0.55\linewidth}
			if $i \in N(j)$;
		\end{minipage}
		\begin{minipage}[l][1.3cm]{0.4\linewidth}
			\[
			\inferrule*[right=SB]{\sigma\ T\ \Box_{i}\psi}{\sigma.\chi(i)\ T\ \psi} 
			\] 
		\end{minipage}\hfill
		\begin{minipage}{0.55\linewidth}
			if $\sigma.\chi(i)$ has already appeared;
		\end{minipage}
		\begin{minipage}[l][1.3cm]{0.4\linewidth}
			\[ 
			\inferrule*[right=FB]{\sigma\ T\ \Box_{i}\psi}{\sigma\ T\ \psi} 
			\] 
		\end{minipage}\hfill
		\begin{minipage}{0.55\linewidth}
			if $i \in F$;
		\end{minipage}
		\begin{minipage}[l][1.3cm]{0.4\linewidth}
			\[ 
			\inferrule*[right=C]{\sigma\ T\ \Box_{i}\psi}{\sigma\ T\ \Box_{j}\psi}  
			\]
		\end{minipage}\hfill
		\begin{minipage}{0.55\linewidth}
			if $i C j$;
		\end{minipage}
		\begin{minipage}[l][1.3cm]{0.4\linewidth}
			\[
			\inferrule*[right=V]{\sigma\ T\ \Box_{i}\psi}{\sigma\ T\ \Box_{j}\Box_i\psi} 
			\] 
		\end{minipage}\hfill
		\begin{minipage}{0.55\linewidth}
			if $i V j$.
		\end{minipage}

		We do not explicitly mention it anywhere else, but of course, we need a set of rules to cover propositional cases as well. In particular we can use\\
		\begin{minipage}[l][1.5cm]{0.3\linewidth}
			\[ 
			\inferrule*{\sigma\ T\ \psi\rightarrow \psi'}{\sigma\ F\ \psi\quad \mid\quad \sigma\ T\ \psi'}  
			\]
		\end{minipage}\hfill and \hfill
		\begin{minipage}[l][1.5cm]{0.3\linewidth}
			\[
			\inferrule*{\sigma\ F\ \psi\rightarrow \psi'}{\sigma\ T\ \psi \\\\ \sigma\ F\ \psi'}  
			\] 
		\end{minipage}
		
		The separator $|$ indicates a nondeterministic choice between the two options it separates.
		
		
			If $b$ is a tableau branch, then\footnote{$\emptyset.P_C^*$ here and wherever else it may appear is the set $\{\emptyset.x|x\in P_C^*\}$.} $W(b) = \{\sigma \in \emptyset.P_C^* | \mbox{ there is some } \sigma\ a \in b \}.$ 
			Let $(R_i)_{i \in [n]}$ be such that for every $i \in [n]$,
			\[R_i = \{(\sigma,\sigma.\chi(i))\in (W(b))^2\} \cup  \{ (w,w)\in (W(b))^2 | i \in F \} \]
			then $\F(b) = (W(b),(R'_i)_{i\in [n]})$, where $(R'_i)_{i\in [n]}$ is the closure of $(R_i)_{i\in [n]}$ under the conditions of frames for the accessibility relations, except for seriality. $(R'_i)_{i\in [n]}$ is constructed in the following way: for every $i \in [n]$, let $R^0_i = R_i$ and for every $k \in \nat \cup \{0\}$, \[R_i^{k+1} = R_i^k \ \cup \ \bigcup_{(i,j) \in C} R_j^k \ \cup \ \bigcup_{(i,j) \in V}\{(a,b) \in (W(b))^2 | \exists (a,c) \in R_j^k, (c,b) \in R_i^k \} \] and then, $\F(b) = (W(b),(\bigcup_{k \in \nat}R_i^k)_{i \in [n]})$.
			
			Finally, let $T(b) = \{ \sigma \ *_i(t,\psi) | \sigma\ T \ *_i(t,\psi) \mbox{ appears in }b\}$ and $F(b) = \{ \sigma \ *_i(t,\psi) | \sigma\ F \ *_i(t,\psi) \mbox{ appears in }b\}$. 
		
		A branch $b$ of the tableau is rejecting when it is propositionally closed or there is some $f \in F(b)$ such that $T(b) \vdash_{*_\cs^{\F(b)}(V,C)} f$. Otherwise it is an accepting branch.

			By induction on the construction of $\F(b)$, it is not hard to see that 
			for every $(\sigma,\tau.\chi(j))\in R_i$, it must be the case that $i C_F^* j$ or that $i \in R$ and $\sigma = \tau.\chi(j)$.
%
%
%
			By induction on the frame construction we can see that
			if $\sigma \ T \ \Box_i \phi$ appears in $b$ and $\sigma R_i \tau$, then $\tau \ T \ \phi$ appears in $b$.
		
		%
			\begin{proposition}\label{prp:tableaucomplete}
				If there is a complete accepting tableau branch $b \ni \emptyset\ T\ \phi$, then the formula $\phi$ is satisfiable by a model for $J$.
			\end{proposition}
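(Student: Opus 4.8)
The plan is to read a model off the accepting branch $b$ and verify it satisfies $\phi$ at the root by a truth lemma. For the universe and the accessibility relations I take exactly $\F(b) = (W(b),(R'_i)_{i\in[n]})$; for the valuation I set $\V(p) = \{\sigma \in W(b) \mid \sigma\ T\ p \in b\}$; and for the admissible evidence function I invoke the characterisation of the $*$-calculus. Since $b$ is accepting there is no $f \in F(b)$ with $T(b) \vdash_{*^{\F(b)}_\cs(V,C)} f$, so by the observation preceding Proposition~\ref{prp:proofbystarcalc} there is an admissible evidence function $\A$ on $\F(b)$ with $\sigma \in \A_i(t,\psi)$ for every $\sigma\ *_i(t,\psi)\in T(b)$ and $\sigma \notin \A_i(t,\psi)$ for every $\sigma\ *_i(t,\psi)\in F(b)$. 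I take $\M = (W(b),(R'_i)_{i\in[n]},\A,\V)$; by construction $\A$ already satisfies every admissible-evidence closure condition, since the $*$-calculus rules mirror those conditions, and the separating property of $\A$ is what will drive the false case of the truth lemma.

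Next I would check that the underlying frame is a frame for $J$. Reflexivity for $i\in F$, the inclusions $R_j\subseteq R_i$ for $(i,j)\in C$, and the $V$-closure condition are all forced directly by the inductive construction of the $R'_i$ from the $R_i$. The delicate point, and the one I expect to be the main obstacle, is \emph{seriality}: $\F(b)$ is built explicitly without it, yet a model for $J$ needs $R_i$ serial for every $i\in D$, hence for every $i\in S$. This is exactly where the rules \textsf{TrD} and \textsf{S}, together with $M_C$ and $N(j)$, do their work. Whenever $\sigma\ T\ \term{t}{i}\psi$ occurs with $i\in S$, \textsf{TrD} forces a prefix $\sigma.\chi(j)$ with $\chi(j)\in M_C(\chi(i))$ and $j\notin R$ into $W(b)$; since $\chi(i)\le_C\chi(j)$ yields $i C_F^* j$ and hence $R_j\subseteq R_i$, the world $\sigma.\chi(j)$ is an $R'_i$-successor of $\sigma$, and because $\chi(j)$ is $\le_C$-maximal this single successor simultaneously serves every serial agent $k$ with $\chi(k)\le_C\chi(j)$. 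Rule \textsf{S}, driven by $N(j)$, then propagates successors to the newly created $F\,\bot$-worlds, so seriality is preserved as one descends. I would make this precise by induction on $W(b)$, showing that each world carries the $R'_i$-successor demanded by every serial $i$; for the worlds (such as the root) that assert no relevant $\Box_i$-formula and so receive no successor from the rules, I would complete the frame by attaching fresh successors in $\le_C$-maximal classes, using maximality to ensure that no true $\Box$-formula is propagated onto them so that the truth lemma is undisturbed.

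With the frame and $\A$ fixed I would prove the truth lemma: for every $\sigma\in W(b)$ and every subformula $\psi$ of $\phi$, $\sigma\ T\ \psi\in b$ implies $\M,\sigma\models\psi$, and $\sigma\ F\ \psi\in b$ implies $\M,\sigma\not\models\psi$. The base cases (sentence letters and $\bot$) use the definition of $\V$ and the fact that an accepting branch is not propositionally closed, and the two $\to$-cases are immediate from the propositional rules. The essential case is $\psi = \term{t}{i}\chi'$. If $\sigma\ T\ \term{t}{i}\chi'\in b$, then \textsf{TrB} (for $i\in S$) or \textsf{Tr} (for $i\notin S$) puts $\sigma\ T\ *_i(t,\chi')$ into $b$, so $\sigma\in\A_i(t,\chi')$; moreover \textsf{TrB} yields $\sigma\ T\ \Box_i\chi'$, and the already-established lemma that $\sigma\ T\ \Box_i\chi'$ with $\sigma R'_i\tau$ forces $\tau\ T\ \chi'\in b$ lets the induction hypothesis give $\M,\tau\models\chi'$ at every $R'_i$-successor $\tau$, whence $\M,\sigma\models\term{t}{i}\chi'$. (When $i\notin S$ the relation $R'_i$ carries no edges, so the successor clause is vacuous and $\sigma\in\A_i(t,\chi')$ already suffices.) If instead $\sigma\ F\ \term{t}{i}\chi'\in b$, then \textsf{Fa} puts $\sigma\ *_i(t,\chi')\in F(b)$, so the separating property of $\A$ gives $\sigma\notin\A_i(t,\chi')$ and hence $\M,\sigma\not\models\term{t}{i}\chi'$ outright, with no need to inspect successors. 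Applying the lemma to the root datum $\emptyset\ T\ \phi\in b$ yields $\M,\emptyset\models\phi$, so $\phi$ is satisfiable in a model for $J$.
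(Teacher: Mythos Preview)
Your overall architecture matches the paper's: build $\V$ from the $T$-atoms in the branch, obtain an admissible evidence function on $\F(b)$ separating $T(b)$ from $F(b)$, and prove a truth lemma by induction on subformulas, with the $F\ \term{t}{i}\chi'$ case handled purely by $\sigma\notin\A_i(t,\chi')$ and the $T\ \term{t}{i}\chi'$ case by the $\Box$-propagation observation. The one substantive divergence is how you repair seriality.

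The paper does \emph{not} attach fresh successor worlds. It keeps $W(b)$ unchanged and instead adds \emph{reflexive loops}: for $i\in S$ it passes to $R'_i = R_i \cup \{(a,a)\mid \exists j\in S,\ i\, C_F^*\, j,\ \not\exists (a,b)\in R_j\}$, leaving $R'_i=R_i$ for $i\notin S$. This is cheaper on several fronts. The universe and $\A$ need no extension (a loop contributes nothing new via $*$V-Dis, so the admissible-evidence closure conditions are preserved for free), and the $V$- and $C$-conditions on the accessibility relations are checked directly, using rule S to guarantee that once a world has a genuine successor it propagates downward. The truth-lemma case for $a\ T\ \term{t}{i}\psi$ then goes through because rule TrD guarantees that every $j\in S$ with $i\, C_F^*\, j$ already has a real $R_j$-successor at $a$; hence no loop was added to $R_i$ at $a$, so $R'_i$ and $R_i$ agree there and the $\Box$-propagation observation covers all $R'_i$-successors.

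Your fresh-successor route can be completed, but your sketch underestimates the bookkeeping: the fresh worlds must themselves receive serial successors (so you either iterate indefinitely or eventually loop anyway), $\A$ must be extended to the enlarged universe while re-verifying all closure conditions including $V$-distribution, and your claim that ``no true $\Box$-formula is propagated onto them'' ultimately rests on the same TrD-based argument the paper makes explicit, not on $\le_C$-maximality alone (a fresh $R_k$-successor at $\sigma$ is automatically an $R_j$-successor for every $j$ with $j\,C_F^*\,k$, so you must rule out any $\sigma\ T\ \term{t}{j}\psi$ for such $j$). The reflexive-loop trick sidesteps all of this in one line.
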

			
			\begin{proof}
				Let $\M = (W,(R_i)_{i\in [n]}, \A, \V)$, where $(W,(R_i)_{i\in [n]}) = \F(b)$, $\V(p) = \{w \in W| w\ T\ p \in b \}$, and $\A_i(t,\psi) = \{w \in W| T(b) \vdash_{*^\F_\cs(V,C)} w\ *_i(t,\psi) \}$.
		
				Let $\M' = (W,(R'_i)_{i\in[n]},(\A_i)_{i\in[n]},\V)$, where for every $i \in [n]$, if $i \in S$, then $R'_i = R_i \cup \{(a,a)\in W^2 |\exists j \in S $ s.t. $ iC_F^* j, \not\exists (a,b) \in R_j \}$ and $R'_i = R_i$, otherwise. $\M'$ is an F-model for $J$: $(\A_i)_{i\in[n]}$ easily satisfy the appropriate conditions, as the extra pairs of the accessibility relations do not affect the $*$-calculus derivation, and we can prove the same for $(R'_i)_{i\in[n]}$. If $aR'_ibR'_jc$ and $j V i$, if $(a,b) \in R'_i\smallsetminus R_i$, then $a = b$ and thus $a R'_i c$. If $(a,b) \in R_i$, then, from rule S, there must be some $(b,c')\in R_j$, so $(b,c)\in R_j$ and thus, $(a,c) \in R_j$. If $(a,b) \in R'_i$ and $j C i$, then, trivially, whether $(a,b) \in R_i$ or not, $(a,b) \in R'_j$. 
				
				By induction on $\chi$, we prove that for every formula $\chi$ and $a \in W$, if $a\ T\ \chi \in b$, then $\M',a\models \chi$ and if $a\ F\ \chi \in b$, then  $\M',a \not\models \chi$. Propositional cases are easy. If $\chi = \term{t}{i}\psi$ and $a\ F\ \chi \in b$, then $a \notin \A_i(t,\psi)$, so $\M', a \not \models \chi$. On the other hand, if $a\ T\ \term{t}{i}\psi \in b$, then $a \in \A_i(t,\psi)$ and by rule TrD,
				 for every $j \in S$ such that $i C_F^* j$, there is some $(a,b)\in R_j$. Therefore, for every $(a,b) \in R_j'$, it is the case that $(a,b) \in R_j$, so by rule TrB, a previous observation about formulas of the form $w\ T\ \Box_i \alpha$, and the inductive hypothesis, for every $(a,b)\in R_i$, $\M',b \models \psi$ and therefore, $\M',a \models \term{t}{i}\psi$. 
			\qed \end{proof}
		Now we can prove the following proposition.
		\begin{proposition}\label{prp:tableaux}
			Let $\phi \in L_n$. $\phi$ is $(J^n_{D,F,V,C})_\cs$-satisfiable if and only if there is a complete tableau branch $b$ that is produced from $\emptyset \ T \ \phi$, such that 
			\begin{itemize}
				\item
				for all $\sigma, \alpha$, not both $\sigma\ T\ \alpha$ and $\sigma \ F \ \alpha$ appear in $b$ and 
				\item
				For any $\beta \in F(b)$, $T(b) \not \vdash_{*^\F_\cs(V,C)} \beta$.
			\end{itemize}
		\end{proposition}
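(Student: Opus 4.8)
The stated equivalence has one direction already in hand: the two bulleted conditions are exactly the requirements for $b$ to be \emph{accepting}, so ``there is a complete accepting branch containing $\emptyset\ T\ \phi$ $\Rightarrow$ $\phi$ is satisfiable'' is precisely Proposition~\ref{prp:tableaucomplete}. Hence all the work lies in the converse, and my plan is to turn a satisfying model into a complete accepting branch by letting the model guide the tableau construction.

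\textbf{Setup and invariants.} By the small-model/strong-evidence result of the semantics subsection, I fix a model $\M=(W,(R_i)_{i\in[n]},\A,\V)$ for $J$ with the Strong Evidence Property and a state $a_0$ with $\M,a_0\models\phi$. I would construct $b$ by closing $\{\emptyset\ T\ \phi\}$ under all the tableau rules, resolving each nondeterministic propositional branching according to truth in $\M$, while building alongside a map $w$ from the world-prefixes in $W(b)$ into $W$ with $w(\emptyset)=a_0$. The construction maintains: (i) if $\sigma\ T\ \psi\in b$ then $\M,w(\sigma)\models\psi$ and if $\sigma\ F\ \psi\in b$ then $\M,w(\sigma)\not\models\psi$ (with the analogues $w(\sigma)\in\A_i(t,\psi)$, resp.\ $w(\sigma)\notin\A_i(t,\psi)$, for $*$-expressions); (ii) if $\sigma\ T\ \Box_i\psi\in b$ then $\M,c\models\psi$ for every $c$ with $w(\sigma)R_i c$; and (iii) $w$ is a homomorphism of accessibility, i.e.\ $\sigma R_i^{\F(b)}\tau$ implies $w(\sigma)R_i w(\tau)$.

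\textbf{Preservation of the invariants.} I would then verify each rule. For Tr, Fa and TrB, invariants (i) and (ii) follow from the truth definition of $\term{t}{i}\psi$; the Strong Evidence Property is essential for Fa, since it is what turns $\M,w(\sigma)\not\models\term{t}{i}\psi$ into $w(\sigma)\notin\A_i(t,\psi)$. The $\Box$-propagation rules C, V, FB and SB preserve (ii) using the frame conditions of $\M$ — $R_j\subseteq R_i$ when $(i,j)\in C$, the implication $xR_jyR_iz\Rightarrow xR_iz$ when $(i,j)\in V$, and reflexivity when $i\in F$ — together with (iii). The delicate rules are the prefix-creating TrD and S. When either introduces a fresh prefix $\tau.\chi(m)$ (with $\tau$ already mapped), its side conditions force $m\in S$, so $m$ reaches some $k\in D\cup F$ along $C^*$; this yields $R_k\subseteq R_m$ with $R_k$ serial (reflexive if $k\in F$), whence $R_m$ is serial at $w(\tau)$ and a witness $d$ with $w(\tau)R_m d$ exists, which I take as $w(\tau.\chi(m))$. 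The definitions of $C_F$, $\equiv_C$, $\equiv_{VC}$ and $M_C$ are precisely what then guarantees that every formula later attached to $\tau.\chi(m)$ — by SB from a $\Box$ at $\tau$, or by a subsequent S-step — holds at $d$, and that (iii) survives, not only for the base pairs of $\F(b)$ but through the whole closure defining it, again by the $C$- and $V$-conditions of $\M$. Checking that this web of agent relations matches exactly the seriality and accessibility demands made on $\M$ is, I expect, the main obstacle.

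\textbf{The branch is accepting.} Finally, from the invariants I would read off the two conditions. The branch is not propositionally closed, since $\sigma\ T\ \alpha$ and $\sigma\ F\ \alpha$ together would force $\M,w(\sigma)$ both to satisfy and to refute $\alpha$ (or $w(\sigma)$ to be both in and out of $\A_i(t,\psi)$). For the second condition, define the pullback $\A'_i(t,\psi)=\{\sigma\in W(b)\mid w(\sigma)\in\A_i(t,\psi)\}$ on the frame $\F(b)$. Invariant (iii) is exactly what makes $\A'$ satisfy $V$-Distribution on $\F(b)$, while the remaining closure conditions ($\cs$-closure, Application, Sum, $V$-Verification, $C$-Conversion) transfer directly from $\A$; thus $\A'$ is an admissible evidence function on $\F(b)$ with $\sigma\in\A'_i(t,\psi)$ for every $\sigma\ *_i(t,\psi)\in T(b)$ and $\sigma\notin\A'_i(t,\psi)$ for every $\sigma\ *_i(t,\psi)\in F(b)$. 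By the observation preceding Proposition~\ref{prp:proofbystarcalc}, the existence of such an $\A'$ is equivalent to there being no $f\in F(b)$ with $T(b)\vdash_{*^{\F(b)}_\cs(V,C)} f$, which is the second bulleted condition. Hence $b$ is a complete accepting branch produced from $\emptyset\ T\ \phi$, and together with Proposition~\ref{prp:tableaucomplete} this establishes the equivalence.
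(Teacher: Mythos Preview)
Your proposal is correct and follows essentially the same route as the paper's proof: invoke Proposition~\ref{prp:tableaucomplete} for one direction, and for the other, pick a model with the Strong Evidence Property, map world-prefixes to model states so that each new prefix $\sigma.\chi(i)$ is sent to an $R_i$-successor of the image of $\sigma$, and maintain by induction on the tableau derivation that $T$-/$F$-prefixed formulas (including $\Box$-formulas under the modal reading and $*$-expressions via $\A$) are, respectively, true/false at the image state. The paper is terser at the end (it simply writes ``the proposition follows'' after recording the invariant for $*$-expressions), whereas you spell out the homomorphism property of the prefix map and the pullback admissible evidence function $\A'$ on $\F(b)$ to invoke the observation before Proposition~\ref{prp:proofbystarcalc}; this is exactly the content hidden in the paper's final sentence, so the arguments coincide.
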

		
			\begin{proof} 
				The ``if'' direction was handled by proposition \ref{prp:tableaucomplete}. We will prove the ``only if'' in the following.
				Let $\M$ be an F-model, $ (W, (R_i)_{i\in [n]},(\A_i)_{i\in [n]},\V)$ 
				that has the strong evidence property and a state $s \in W$ such that $\M, s \models \phi$. 
				Furthermore, 
				fix some $\cdot^\M : \emptyset.P_C^* \To W$, such that $\emptyset^\M = s$ and the following conditions are met.  For any $\sigma.\chi(i)\in P_C^*$, $(\emptyset.\sigma.\chi(i))^\M$ is some element of $W$ s.t. $((\emptyset.\sigma)^\M,(\emptyset.\sigma.\chi(i))^\M) \in R_i$.
				
				Let $L_n^\Box = \{\Box_{i_1}\cdots \Box_{i_k} \phi | \phi \in L_n, k \in \nat, \manyk{i} \in [n] \}$. Given a state $a$ of the model,  and $\Box_i\psi \in sub_\Box(\phi)$, $\M, a \models \Box_i\psi$ 
								has the usual, modal interpretation, 
								$\M, a \models \Box_i\psi$ iff 
								for every $(a,b) \in R_i$, $\M, b \models \psi$
				
				We can see in a straightforward way and by induction on the tableau derivation that there is a branch, such that if $\sigma \ T \ \psi$ appears in the branch and $\psi \in L_n^\Box$, then $\M,\sigma^\M \models \psi$, if $\sigma \ F \ \psi$ appears in the branch and $\psi \in L_n^\Box$, then $\M,\sigma^\M \not\models \psi$, if $\sigma \ T \ *_i(t,\psi)$ appears in the branch, then $\sigma^\M \in \A_i(t,\psi)$ and if $\sigma \ F \ *_i(t,\psi)$ appears in the branch, then $\sigma^\M \not \in \A_i(t,\psi)$. 
				The proposition follows.
			\qed \end{proof}
		
		\subsection{An Improved Tableau}
		By taking a closer look at the interactions between the agents we can further improve the efficiency of our tableau procedure and we do that in this section. We use this improvement to prove an upper bound on the complexity of a general class of logics. We need the following definitions and lemma \ref{lem:clusters}, which is a generalization of a result from \cite{Achilleos:wollic11} and has appeared in simpler forms in \cite{Achilleos14TRtwoagent}.

		First we define an additional equivalence relation: 
		$\equiv_{VC} = Q \cap Q^{-1}$. As an equivalence relation, this one too gives equivalence classes on $S$ and they are
		$P_{VC} = \{\many{P}{k_{VC}}\}$.
		Notice that as $C^* \subseteq (V\cup C)^*$, $\equiv_C \subseteq \equiv_{VC}$ and therefore $P_C$ is a refinement of $P_{VC}$. We'll call $P = P_{VC}$. 
		Furthermore, notice that for any $L \in P$, either $\exists x,y \in L$ s.t. $x V y$, or $L \in P_C$. In the first case, $L$ will be called a \emph{V-class} of agents and in the second case it will be called a \emph{C-class} of agents. For each agent $i \in [n]$, $P(i)$ will be the equivalence class $L \in P$ s.t. $i \in L$ and 
		To help keep the relationship between the sets of equivalence classes in mind, notice that $i \in \chi(i) \subseteq P(i) \subseteq S \subseteq [n]$.
		Furthermore, we can extend relations $<_C$ and $<_{VC}$ on $P$ in the same way they were defined on $P_C$.

		\begin{lemma}\label{lem:clusters}
			Let $\M = (W,(R_i)_{i\in [n]}, (\A_i)_{i\in [n]}, \V)$ be a $(J^n_{D,F,V,C})_\cs$ F-model of at most $2^{|\phi|}$ states, $P_a \in P$ a V-class of agents, and $u \in W$.
			Then, there are states of $W$, $(a_i)_{i \in P_a}$, such that 
			\begin{enumerate} 
				\item For any $i \in P_a$, $u R_i a_i$.
				\item For any $i,j \in P_a$, $v, b \in W$, if $a_i, b R_j v$, then $b R_j a_j$.
			\end{enumerate}
			$(a_i)_{i \in P_a}$ will be called \emph{a $P_a$-cluster for $u$}.
		\end{lemma}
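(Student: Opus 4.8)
The plan is to realise $(a_i)_{i\in P_a}$ as the representatives of a single \emph{final cluster} of $\M$ that is reachable from $u$, and then to extract conditions (1) and (2) from the way the accessibility relations of the agents in $P_a$ interact inside that cluster. The one structural fact that drives everything is seriality: since $P_a\subseteq S$, every $R_i$ with $i\in P_a$ is serial. Indeed, if $i\in S$ then $iC^* j$ for some $j\in D\cup F$, and each $C$-step only enlarges the relation, so $R_j\subseteq R_i$; as $R_j$ is serial ($j\in D$) or reflexive and hence serial ($j\in F$), seriality is inherited upward along $\subseteq$. Combined with $|W|\le 2^{|\phi|}<\infty$, this is exactly what a terminal-cluster argument needs, and the hypothesis that $P_a$ is a \emph{V-class} supplies two more ingredients: a witnessing edge $x_0\,V\,y_0$ with $x_0,y_0\in P_a$, and (since $P_a$ is a single $\equiv_{VC}$-class) a $Q$-path between any two of its agents.

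First I would build the cluster. Consider the transitive relation $R_{P_a}=\big(\bigcup_{i\in P_a}R_i\big)^{+}$; its strongly connected components form a finite acyclic order, and, starting from $u$ and using seriality to keep moving, one reaches a terminal component $K$ (one with no $R_{P_a}$-edge leaving it). Because each $R_i\subseteq R_{P_a}$ and $K$ is terminal, $K$ is closed under every $R_i$ with $i\in P_a$, and by seriality every state of $K$ has an $R_i$-successor, which again lies in $K$. The two frame conditions here are $R_j\subseteq R_i$ whenever $(i,j)\in C$, and $R_jR_i\subseteq R_i$ whenever $(i,j)\in V$ (that is, $aR_jbR_ic$ implies $aR_ic$); chasing these along the $Q$-paths of $P_a$ converts path-reachability into single-step relational inclusions inside $K$. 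Using this, together with seriality, I would collapse the $R_{P_a}$-path from $u$ into $K$ so as to obtain, for each $i\in P_a$, a genuine one-step witness $u\,R_i\,a_i$ with $a_i\in K$; this yields condition (1).

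Condition (2), the absorption property, is where the real work lies. Given $a_i\,R_j\,v$ and $b\,R_j\,v$, closure of $K$ forces $v\in K$; by seriality $v$ has a successor back in $K$, and the verification inclusion $R_jR_{x_0}\subseteq R_{x_0}$ (chased along the appropriate $Q$-path from $j$ toward the $V$-source $x_0$) converts $b$'s access to $v$ into access, under the well-behaved relation attached to the $V$-edge, to a state of $K$, from which the chosen representative $a_j$ is reached. The natural generalisation of the two-agent clustering of \cite{Achilleos14TRtwoagent} is precisely the bookkeeping of which $Q$-path to chase for each pair $i,j\in P_a$.

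I expect this last step to be the main obstacle, for a concrete reason: a terminal component of $R_{P_a}$ need \emph{not} be universal under any single $R_j$, and an individual $R_j$ need \emph{not} be transitive -- already a two-agent frame with $1\,V\,2$ and $2\,C\,1$ (so $R_1\subseteq R_2$ and $R_2R_1\subseteq R_1$) exhibits a non-transitive coarser relation $R_2$ whose $R_2$-edges can leave a final cluster. Hence one cannot simply invoke ``final cluster $=$ universal relation'' as in the single-agent case of \cite{Achilleos:wollic11}; the argument must instead thread every outside predecessor $b$ through seriality (to re-enter $K$) and through the verification condition (to land under the transitive core relation associated with the $V$-edge), and it is checking that this threading works uniformly for all pairs $i,j\in P_a$ that constitutes the substance of the proof.
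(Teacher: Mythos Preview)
Your plan differs from the paper's and, as written, has a real gap at condition~(2). The paper does not search for a terminal strongly connected component of $\bigcup_{i\in P_a}R_i$; it runs an iterative fix-point. One fixes a walk $p:[m]\to P_a$ through $P_a$ along $V\cup C$ that uses at least one $V$-edge; following this walk through the model from any state $s$ (stepping in $W$ only at $V^{-1}$-edges, using seriality) produces a tuple $(b_k(s))_{k\in[m]}$ with the compression property that $s\,R_{p(k)}\,b_k(s)$ and, more to the point, any $z$ with $z\,R_j\,s$ for a covered $j$ already has $z\,R_{p(k)}\,b_k(s)$. Starting from $u$ gives a first candidate; whenever condition~(2) fails via some $b,v,i,j$ one restarts the walk from the witness $v$. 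The compression property shows every earlier violator $b$ now $R_i$-reaches all later candidates for every $i\in P_a$, so in a finite model the sequence of violators cannot be injective forever and the process stabilises; the stable tuple is the cluster.

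In your argument the phrase ``from which the chosen representative $a_j$ is reached'' hides exactly the difficulty. Chasing $b\,R_j\,v$ through a verification inclusion of the form $R_jR_{x_0}\subseteq R_{x_0}$ (this is what $(x_0,j)\in V$ gives) lands $b$ one step forward under $R_{x_0}$, not under $R_j$; each $V$-step of the $Q$-path chase changes the agent index. The conclusion you need is $b\,R_j\,a_j$ for the \emph{original} $j$, and the terminal SCC $K$ supplies connectivity only under $\bigcup_i R_i$, not under any single $R_j$ --- as you yourself note, no individual $R_j$ need be transitive and $K$ need not be an $R_j$-clique. In fact, once the lemma is established one gets $a_i\,R_j\,a_j$ for \emph{all} $i,j\in P_a$ (take $b=a_i$ and invoke seriality in condition~(2)), a super-clique property strictly stronger than membership in a terminal SCC; there is no evident way to extract this from the SCC structure without an iterative refinement of the kind the paper performs. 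Your honest caveat that ``checking that this threading works uniformly\ldots constitutes the substance of the proof'' is accurate, but the proposal does not supply that substance.
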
	
		\begin{proof}	
				For this proof we need to define the following.
				Let $i \in [n]$, $w, v \in W$. \emph{An $E_V$-path ending at $i$} (and starting at $i'$) from $w$ to $v$ is a finite sequence $\many{v}{k+1}$, such that for some $\manyk{j} \in [n]$, $\many{E}{k-1} \in \{C^{-1},V^{-1}\}$, where for some $j \in [k - 1]$ $E_j = V^{-1}$ and $j_k = i$ (and $j_1 = i'$), for every $a \in [k - 1]$, $j_a E_a j_{a + 1}$ and if $E_a = C^{-1}$, then $v_{a+1} = v_{a+2}$, while if $E_a = V^{-1}$, then $v_{a+1} R_{j_{a+1}} v_{a+2}$ and $v_1 = w$, $v_k = v$, $v_1 R_1 v_2$. The $E_V$-path \emph{covers} a set $s\subseteq [n]$ if $\{\manyk{j}\} = s$. For this path and $a \in [k]$, $v_{a+1}$ is a $j_a$-state. Notice that if there is an  $E_V$ path ending at $i$ from $w$ to $v$ and some $j\in s$ and $z\in W$ such that the path covers $s$ and $z R_j w$, it must also be the case that $w,z R_i v$.
																
			Let $p : [m] \To P_a$ be such that $m \in \nat$, $p[[m]] = P_a$ and for every $i+1 \in [m]$, either $p(i+1) C p(i)$ or $p(i+1) V p(i)$ and there is some $i+1 \in [m]$ such that $p(i+1) V p(i)$. For any $s \in W$, $x \in \nat$ let $b_0(s),b_1(s),b_2(s),\ldots,b_{m}(s)$ be the following: $b_0(s) = s$, for all $k \in [m]$, 
			$b_{1}(s)$ will be such that there is an $E_V$ path ending at $p(1)$ from $s$ to $b_{1}(s)$ and covering $P_a$ and if $k>1$, $b_k(s)$ is such that 
			$b_0(s),b_1(s),b_2(s),\ldots,b_{k}(s)$ is an $E_V$ path ending at $p(k)$.
			Let $(b^x_i)_{i \in [m], x\in \nat},(a^x_i)_{i \in [m], x\in \nat}$ be defined in the following way. For every $i \in [m]$,
			$b^0_i = b_i(u)$ and for every $x \in \nat$, 
			$a^x_i = b_i(b_m^x)$. 
			Finally, for $0 < x \in \nat$, 
			$(b^x_i)_{i \in [m]}$ is defined in the following way.
			If there are some $b_x,v \in W$, $i,j \in P_a$, such that $b_x R_{j} v$, $a^{x-1}_i R_{j} v$ and not $b_x R_j a^{x-1}_j$,
			then for all $i \in P_a$, $b_i^x = b_i(v)$. Otherwise, $(b^x_i)_{i \in P_a} = (a^x_i)_{i \in P_a}$.
			By induction on $x$, we can see that for every $x,y \in \nat$, $i \in P_a$, if $y\geq x$, then $b_x R_i b_i^y, a_i^y$. Since the model has a finite number of states, there is some $x \in \nat$ such that for every $y \geq x$, $(b^y_i)_{i \in P_a} = (a^y_i)_{i \in P_a}$.
			Therefore, we can pick appropriate $(a_i)_{i \in P_a}$ among $(a^k_i)_{i \in P_a}$ that satisfy conditions 1, 2.
		\qed \end{proof}

		We recursively define relation $\rightarrow$ on $S^*$: 
		\begin{itemize}
			\item if $jC_F i$ then $i \rightarrow j$;
			\item if $j V i$, then $ij \rightarrow j$;
			\item If $\beta \rightarrow \delta$, then $\alpha \beta \gamma \rightarrow \alpha \delta \gamma$.
		\end{itemize} 
		$\rightarrow^*$ is the reflexive, transitive closure of $\rightarrow$.
		$\rightarrow^*$ tries to capture the closure of the conditions on the accessibility relations of a frame. This is made explicit by observing that		
			if for some frame $(W,(R_i)_{i\in [n]})$, $a R_{i_1}R_{i_2}\cdots R_{i_k}b$ and $i_1i_2\cdots i_k \rightarrow^* j_1j_2\cdots j_l$, then $a R_{j_1}R_{j_2}\cdots R_{j_l}b$. Furthermore, if, in addition, $l = k$, then for every $r \in [k]$, $j_r C_F^* i_r$.
%
		For every agent $i \in S$, we introduce a new agent, $\overline{i}$ and we extend $\rightarrow^*$, so that when $P(i)$ a V-class, then $\overline{i} \rightarrow^* \chi$ for every $\chi \in S^*$ such that $\chi \rightarrow^* i$.
		For each $L \in P_C$, we fix some $i \in P_C$ and $\overline{L} = \overline{i}$. Furthermore, if $xy \in P_C^* \cup S^*$, then $\overline{xy} = \overline{x}$ $\overline{y}$.
		This extended definition of $\rightarrow^*$ tries to capture the closure of the conditions on the accessibility relations of a frame like the ones that will result from a tableau procedure as defined in the following.
		
		Let 
		$L \in P$ and $\sigma$ be a finite string of elements from  $P_C$. Then, $L$ is \emph{visible} from $\emptyset.\sigma$ if and only if there is some $\chi(i)\subseteq L$, some $\chi \in P_C^*$ and some $\alpha \in S^*$ such that $\sigma = \tau.\chi(i).\chi$ and $\overline{\chi} \alpha \rightarrow^* i$; $\tau.\chi(i)$ is then called the $L$-view from $\sigma$. Notice that there is a similarity between this definition and the statement of lemma \ref{lem:clusters} - this will be made explicit later on.
		
		Then we adjust the tableau by altering rules TrD and S and introduce rule SVB:

\noindent
		\begin{minipage}[l][1.3cm]{0.4\linewidth}
			\[ 
			\inferrule*[right=TrD]{\sigma\ T\ \term{t}{i}\psi}{  \sigma.\chi(j)\ F\ \bot }  
			\]
		\end{minipage}\hfill
		\begin{minipage}{0.55\linewidth}
			if $i \in S$, $\chi(j) \in M_C(\chi(i))$, $j \notin R$, and $P(j)$ is not a $V$-class visible from $\sigma$;
		\end{minipage}
		\begin{minipage}[l][1.3cm]{0.4\linewidth}
			\[
			\inferrule*[right=S]{\sigma.\chi(j) \ F \ \bot }{\sigma.\chi(j).\chi(i) \ F \ \bot} 
			\]
		\end{minipage}\hfill
		\begin{minipage}{0.55\linewidth}
			if $i \in N(j)$
			and it is not the case that 
			$P(i)$ is a $V$-class visible from $\sigma.\chi(j)$;
		\end{minipage}		
		\begin{minipage}[l][1.4cm]{0.4\linewidth}
			\[
			\inferrule*[right=SVB]{\sigma\ T\ \Box_{i}\psi}{\tau.\chi(i)\ T\ \psi}
			\] 
		\end{minipage}\hfill
		\begin{minipage}{0.55\linewidth}
			if $ P(i)$ a $V$-class, visible from $\sigma$, $\tau.\chi(j)$ is the $P(i)$-view from $\sigma$ and $\tau.\chi(i)$ has already appeared in the tableau.
		\end{minipage}
		
		Then we have to redefine the frame $\F(b)$.
		Let $(R_i)_{i \in [n]}$ be such that for every $i \in [n]$,
		\[R_i = \{(\sigma,\sigma.\chi(i))\in (W(b))^2\} \cup  \{ (w,w)\in (W(b))^2 | i \in F \} \]
		\[ \cup \]
		\[\{(\sigma,\tau.\chi(i))\in (W(b))^2|
		P(i) \mbox{ a }V\mbox{-class, $\tau.\chi(j)$ the $P(i)$-view from }\sigma \}\]
		and $\F(b) = (W(b),(R'_i)_{i\in [n]})$, where $(R'_i)_{i\in [n]}$ is the closure of $(R_i)_{i\in [n]}$ as it was defined before.
				
		Proposition \ref{prp:tableaucomplete} and its proof remain the same. To prove proposition \ref{prp:tableaux} for this version of the rules, follow the same proof, but 	for every $a \in W$ and V-class $L$ fix some $L$-cluster for $a$ and  if $P(i)$ is a $V$-class, then for  $(a_j)_{j\in L}$, the fixed $L$-cluster for $\sigma^\M$,  $(\sigma.\chi(i))^\M = a_i$ - and observe that if $P(i)$ a $V$-class, visible from $\sigma$ and $\tau.\chi(j)$ is the $P(i)$-view from $\sigma$, then in model $\M$ there is some $v$ such that $\sigma^\M, (\tau.\chi(j))^\M R_i v$, which by the definition of clusters in turn means that $\sigma^\M R_j (\tau.\chi(j))^\M$. The remaining proof is the same.
		
		Notice that if for every appearing world-prefix $\sigma.\chi(i)$, $i$ is always in the same V-class $L$, then all prefixes are of the form $\emptyset.\chi(j)$, where $j \in L$. In that case we can simplify the box rules and in particular just ignore rule V and end up with the following result.
				
		
		\begin{corollary}\label{cor:easycasesinpi2}
			When there is some V-class $L$ such that  for every $i \in S\smallsetminus R $ there is some $i' \in L$ such that $iC_F^* i'$, then $(J^n_{D,F,V,C})_\cs$-satisfiability  is in $\Sigma_2^p$.
		\end{corollary}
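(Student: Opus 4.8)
The plan is to exploit the remark preceding the corollary: under the stated hypothesis the improved tableau produces branches of polynomial size, and I then combine the nondeterministic guessing of such a branch with the $\NP$ test of Proposition \ref{thm:calccompnew} to obtain a $\Sigma_2^p = \NP^{\NP}$ decision procedure for satisfiability. The whole argument rests on showing that no world-prefix of length greater than one is ever created.

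First I would make precise the claim that every world-prefix appearing in a branch built from $\emptyset\ T\ \phi$ is either $\emptyset$ or of the form $\emptyset.\chi(j)$ with $j \in L$. New prefixes are introduced only by rules TrD and S, both of which now carry a visibility side-condition. Rule TrD requires $i \in S$, $\chi(j) \in M_C(\chi(i))$ and $j \notin R$; since $\chi(j)$ is $\leq_C$-maximal and, by the hypothesis, $j \in S \smallsetminus R$ satisfies $j C_F^* i'$ for some $i' \in L$, maximality of $\chi(j)$ forces $\chi(j) = \chi(i') \subseteq L$, so $P(j) = L$ is a $V$-class. One checks that $L$ is visible from every prefix $\emptyset.\chi(j)$ with $j \in L$ but not from $\emptyset$; hence TrD creates $\emptyset.\chi(j)$ from the root yet is blocked from extending it, and the analogous reading of the visibility side-condition on rule S prevents any deeper nesting as well. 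With a single layer of worlds all belonging to $L$, rule V may be dropped, so box-prefixes never grow past a single $\Box$ and the boxed formulas occurring in the branch are exactly the $\sigma\ T\ \Box_i \psi$ with $\psi$ a subformula of $\phi$.

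These bounds give what is needed: $|W(b)| \le n + 1$, each prefix carries only $O(|\phi|)$ distinct subformulas, $*$-expressions and boxed formulas, so a complete branch $b$ — together with $\F(b)$, $T(b)$ and $F(b)$ — has size polynomial in $|\phi|$. By Proposition \ref{prp:tableaux} in its form for the improved rules, $\phi$ is satisfiable if and only if some complete branch $b \ni \emptyset\ T\ \phi$ is propositionally open and satisfies $T(b) \not\vdash_{*^{\F(b)}_\cs(V,C)} f$ for every $f \in F(b)$. The algorithm is then: a nondeterministic polynomial-time machine guesses the polynomially many choices that build a complete $b$ and rejects if $b$ is propositionally closed; it then asks an $\NP$ oracle the single question ``is there some $f \in F(b)$ with $T(b) \vdash_{*^{\F(b)}_\cs(V,C)} f$?'', which is in $\NP$ because the oracle may guess $f$ and run the procedure of Proposition \ref{thm:calccompnew} on the polynomial-size frame $\F(b)$. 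The machine accepts exactly when $b$ is complete, propositionally open, and the oracle answers ``no''. By the characterization this accepts iff $\phi$ is satisfiable, and it is an $\NP$ computation with one $\NP$ query, i.e. a $\Sigma_2^p$ procedure.

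The hard part is the first step: proving rigorously that the single $V$-class $L$ absorbs all seriality demands, so that the modified side-conditions on rules TrD, S and SVB keep every state-prefix at depth one, and that discarding rule V costs nothing. This is exactly where the interplay of $M_C$, $C_F^*$, the relation $\rightarrow^*$, and the definition of visibility must be checked carefully, using the hypothesis that every $i \in S \smallsetminus R$ lies $C_F^*$-below some agent of the one $V$-class $L$; once depth one is established, the $\Sigma_2^p$ bound follows routinely from the size bounds and Proposition \ref{thm:calccompnew}.
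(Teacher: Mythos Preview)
Your proposal is correct and follows essentially the same approach as the paper: the paper's entire argument is the one-sentence observation immediately preceding the corollary, namely that under the hypothesis every appearing world-prefix lies in the single $V$-class $L$, so all prefixes have the form $\emptyset.\chi(j)$ with $j\in L$, rule V may be dropped, and the $\Sigma_2^p$ bound follows. You have expanded this sketch considerably, spelling out why the visibility side-conditions on TrD and S block prefixes of depth two and making the $\NP^{\NP}$ algorithm explicit via Proposition~\ref{thm:calccompnew}; the paper leaves all of that implicit.
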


		\section{Complexity Jumps}
		
		In this section we look into some more specific cases of multi-agent justification logics and demonstrate certain jumps in the complexity of the satisfiability problem for these logics. We first revisit the two-agent logics from \cite{Achilleos14TRtwoagent}. Like in the previous sections, we assume our constant specifications are schematic and axiomatically appropriate (and in \P\ for upper bounds).
		
		Our definition here of $(J^n_{D,F,V,C})_\cs$ allows for more two-agent logics than the ones that were studied in \cite{Achilleos14TRtwoagent}. It is not hard, though, to extend those results to all two-agent cases of $(J^n_{D,F,V,C})_\cs$: when there are $\{i,j\} = [2] $, $i \in D \setminus F$, $\emptyset \neq V\subseteq \{(i,j), (j,j) \} $, $(i,j) \in C$, and $(j,i) \notin C$, then $(J^2_{D,F,V,C})_\cs$-satisfiability is \PSPACE-complete; otherwise it is in $\Sigma_2^p$ (see \cite{Achilleos14TRtwoagent}).
		
		We will further examine the following two cases. $\J_1, \J_2$ are defined in the following way. $n_1 = n_2 = 3$; $D_1 = D_2 = \{1,2\}$; $F_1 = F_2  = \emptyset$; $V_1 = \emptyset$, $V_2 = \{(3,3)\}$; $C_1 = C_2  = \{(3,1),(3,2)\}$; finally, for $i \in [2]$, $\J_i = (J^{n_i}_{D_i,F_i,V_i,C_i})_{\cs_i}$, where $\cs_i$ is some axiomatically appropriate and schematic constant specification.
		 
		By an adjustment of the reductions in \cite{Achilleos14DiamondFreeArxiv}, as it was done in \cite{Achilleos14TRtwoagent}, it is not hard to prove that
			$\J_{1}$ is \PSPACE -hard and $\J_{2}$ is \EXP -hard.\footnote{$\J_1$ would correspond to what is defined in \cite{Achilleos14TRtwoagent} as $\d_2 \oplus_\subseteq \k$ and $\J_2$ to $\d_2 \oplus_\subseteq \df$. Then we can pick a justification variable $x$ and we can either use the same reductions and substitute $\Box_i$ by $\term{x}{i}$, or we can just translate each  diamond-free fragment to the corresponding justification logic in the same way. It is not hard to see then that the original modal formula behaves exactly the same way as the result of its translation with respect to satisfiability - just consider F-models where always $\A_i(t,\phi) = W$.} 
			Notice that the way we prove $\PSPACE$-hardness for $\J_1$ is different in character from the way we prove the same result for the two-agent logics in \cite{Achilleos14TRtwoagent}. For $\J_1$ we use the way the tableau prefixes for it branch, while for $(\jdf \times_C \jd)_\cs$ the prefixes do not branch, but they increase to exponential size.
		
		In fact, we can see that $\J_1$ is \PSPACE-complete, while $\J_2$ is \EXP-complete. The respective tableau rules as they turn out for each logic are - notice that neither logic has any $\leq_{C}$-maximal V-classes:
		
		\emph{The rules for $\J_1$ are:}
		\\ \begin{minipage}[l][1.7cm]{0.3\linewidth}
			\[ 
			\inferrule*[right=TrB]{\sigma\ T\ \term{t}{i}\psi}{\sigma\ T\ *_i(t,\psi) \\\\ \sigma\ T\ \Box_i\psi }  
			\]
		\end{minipage}
		\begin{minipage}[l][1.3cm]{0.3\linewidth}
			\[ 
			\inferrule*[right=TrD]{\sigma\ T\ \term{t}{1}\psi}{  \sigma.\{1\}\ F\ \bot }  
			\]
		\end{minipage}\hfill
		\begin{minipage}[l][1.3cm]{0.3\linewidth}
			\[ 
			\inferrule*[right=TrD]{\sigma\ T\ \term{t}{2}\psi}{  \sigma.\{2\}\ F\ \bot }  
			\]
		\end{minipage}\hfill
		\begin{minipage}[l][1.7cm]{0.3\linewidth}
			\[ 
			\inferrule*[right=TrD]{\sigma\ T\ \term{t}{3}\psi}{  \sigma.\{1\}\ F\ \bot \\\\ \sigma.\{2\}\ F\ \bot }  
			\]
		\end{minipage}\hfill
		\begin{minipage}[l][1.3cm]{0.3\linewidth}
			\[ 
			\inferrule*[right=Fa]{\sigma\ F\ \term{t}{i}\psi}{\sigma\ F\ *_i(t,\psi)} 
			\]
		\end{minipage}\hfill
		\begin{minipage}[l][2cm]{0.3\linewidth}
			\[
			\inferrule*[right=SB]{\sigma\ T\ \Box_{i}\psi}{\sigma.\{i\}\ T\ \psi} 
			\] 
			if $\sigma.\{i\}$ has already appeared;
		\end{minipage}
		\begin{minipage}[l][1.9cm]{0.25\linewidth}
			\[ 
			\inferrule*[right=C]{\sigma\ T\ \Box_{3}\psi}{\sigma\ T\ \Box_{1}\psi \\\\ \sigma\ T\ \Box_{2}\psi}  
			\]
		\end{minipage}\hfill

Notice that the maximum length of a world prefix is at most $|\phi|$, since the depth (nesting of terms) of the formulas decrease whenever we move from $\sigma$ to $\sigma.\{i\}$. Also notice that when we run the $*$-calculus, there is no use for rule $*$V-Dis, so we can simply run the calculus on one world-prefix at the time, without needing the whole frame. Therefore, we can turn the tableau into an alternating polynomial time procedure, which uses a nondeterministic choice when the tableau would make a nondeterministic choice (when we apply the propositional rules) and uses a universal choice to choose whether to increase prefix $\sigma$ to $\sigma.\{1\}$ or to $\sigma.\{2\}$. This means that $\J_1$-satisfiability is \PSPACE-complete.

\emph{The rules for $\J_2$ are the same with the addition of:}
		\\
		\begin{minipage}[l][1.3cm]{0.4\linewidth}
			\[
			\inferrule*[right=V]{\sigma\ T\ \Box_{3}\psi}{\sigma\ T\ \Box_{3}\Box_3\psi} 
			\] 
		\end{minipage}\hfill 
		\begin{minipage}[l][1.5cm]{0.4\linewidth}
			\[
			\inferrule*[right=S]{\sigma.\{i\} \ F \ \bot }{\sigma.\{i\}.\{1\} \ F \ \bot \\\\ \sigma.\{i\}.\{2\} \ F \ \bot} 
			\]
		\end{minipage}\hfill
		
		For the tableau procedure of $\J_2$ we have no such bound on the size of the largest world-prefix, so we cannot have an alternating polynomial time procedure. As before, though, the $*$-calculus does not use rule $*$V-Dis, so again we can run the calculus on one world-prefix at the time. Furthermore, for every prefix $w$, $|\{ w\ a \in b \}|$ is polynomially bounded (observe that we do not need more than two boxes in front of any formula), so in turn we have an alternating polynomial space procedure. Therefore, $\J_2$-satisfiability is \EXP-complete.

\end{document}